\documentclass{article}
\usepackage{ijcai26}

\usepackage{times}
\usepackage{soul}
\usepackage{url}
\usepackage[hidelinks]{hyperref}
\usepackage[utf8]{inputenc}
\usepackage[switch]{lineno} 
\usepackage{amsmath}
\usepackage{amssymb}
\usepackage{mathtools}
\usepackage{amsthm}
\usepackage{amsfonts}

\usepackage{graphicx}
\usepackage{booktabs}
\usepackage{subcaption}
\usepackage{multirow}
\usepackage{array}
\usepackage{tabularx}
\usepackage{balance} 

\usepackage{xcolor}
\definecolor{maddpgcolor}{RGB}{76,114,176}
\definecolor{mappocolor}{RGB}{221,132,82}
\definecolor{comacolor}{RGB}{129,114,178}
\definecolor{qmixcolor}{RGB}{196,78,82} 
\definecolor{phiaccolor}{RGB}{85,168,104}

\usepackage{algorithm}
\usepackage{algpseudocode} 

\usepackage{microtype} 

\theoremstyle{plain}

\newtheorem{proposition}{Proposition}

\theoremstyle{definition}
\newtheorem{definition}{Definition}

\title{Phi-Actor-Critic: Steering General-Sum Games to \\ Pareto-Efficient Correlated Equilibria}

\author{
Wongyu Lee$^1$
\and
Francesco Lelli$^2$\and
Omran Ayoub$^3$\And
Massimo Tornatore$^1$\\
\affiliations
$^1$Politecnico di Milano\\
$^2$Tilburg University\\
$^3$University of Applied Sciences and Arts of Southern Switzerland\\
\emails
wongyu.lee@polimi.it,
f.lelli@tilburguniversity.edu,
omran.ayoub@supsi.ch,
massimo.tornatore@polimi.it
}

\begin{document}

\maketitle

\begin{abstract}
Real-world multi-agent systems, from traffic coordination to resource allocation, are often modeled as general-sum games where individual incentives conflict with collective welfare. In these settings, the central challenge is not merely finding an equilibrium, but selecting socially desirable outcomes among many suboptimal Nash equilibria. 
Standard deep multi-agent reinforcement learning (MARL) methods struggle with this problem, as value-decomposition approaches are constrained by monotonicity assumptions and policy-gradient methods often converge to stable but socially inefficient equilibria.
To address this limitation, we propose $\Phi$-Actor-Critic ($\Phi$-AC), a framework that leverages swap regret minimization to steer learning toward high-welfare correlated equilibria (CE). To make counterfactual regret estimation tractable in deep MARL, $\Phi$-AC employs a centralized attention critic that predicts vector-valued regrets in a single forward pass, avoiding computationally expensive counterfactual simulations. We further introduce a Lagrangian-based equilibrium selection mechanism that optimizes social welfare while enforcing stability through regret constraints.
Experiments on matrix games, Multi-Agent Particle Environments (MPE), and the Melting Pot Harvest scenario demonstrate that $\Phi$-AC learns efficient and stable coordination strategies across diverse mixed-motive settings while maintaining high collective return and competitive fairness.
\end{abstract}

\section{Introduction}
Achieving socially efficient coordination, where agents collectively obtain higher welfare, remains challenging in multi-agent systems with partially conflicting incentives. 
Even when higher-welfare collective strategies exist, independent learning dynamics in mixed-motive environments often converge to individually rational but socially inefficient behaviors, such as risk-averse coordination or persistent asymmetric roles \cite{shoham2008multiagent}.
Existing deep multi-agent reinforcement learning (MARL) methods often struggle to avoid such outcomes. 
Value-decomposition approaches such as QMIX \cite{rashid2020monotonic} rely on restrictive monotonicity assumptions, while policy-gradient methods (PGMs) such as MAPPO \cite{yu2022surprising} lack mechanisms that explicitly select socially desirable equilibria.
As a result, these methods may converge to stable but low-welfare Nash equilibria (NE).

To address this limitation, we shift the focus from standard reward maximization to regret-based equilibrium selection rooted in algorithmic game theory (AGT) \cite{hart2000simple,blum2007external,cesa2006prediction}. 
Rather than targeting only independent NE, regret-based learning enables correlated solution concepts that support richer coordination in general-sum games. 
In particular, while external regret minimization is associated with broad coarse correlated equilibria (CCE), swap-regret-based learning refines the target to the tighter class of correlated equilibria (CE) by evaluating whether an agent would benefit from conditionally replacing one action with another \cite{hart2000simple,blum2007external}.

Based on this insight, we propose $\Phi$-Actor-Critic ($\Phi$-AC), a regret-aware framework designed to operationalize swap regret for equilibrium selection in deep MARL. 
We address two practical challenges in applying swap regret to deep MARL: scalable regret estimation and equilibrium selection.
First, since calculating swap regret typically requires computationally expensive counterfactual evaluation, we introduce a centralized attention critic that efficiently predicts vector-valued regrets in a single forward pass ($O(1)$). 
Second, since low regret only encourages behavior consistent with approximate CE conditions, it does not specify which equilibrium within the CE set should be selected. 
We therefore introduce the Regret-Balancing Social Welfare Objective (RB-SWO), which biases learning toward Pareto-efficient CE by optimizing social welfare under regret constraints.
By formulating the learning problem as a constrained Markov game optimized via a Lagrangian objective, $\Phi$-AC encourages socially efficient behavior while maintaining low swap regret.

Our contributions are threefold:
\begin{enumerate}
    \item \textbf{Efficient Regret Estimation.} We introduce a regret-conditioned critic that predicts swap regret in a single forward pass, avoiding expensive counterfactual simulations in high-dimensional environments.
    \item \textbf{Principled Equilibrium Selection.} We introduce a Lagrangian-based selection mechanism that steers learning toward efficient and fair equilibria under regret constraints.
    \item \textbf{Scalable Coordination in Social Dilemmas.} We demonstrate that $\Phi$-AC learns high-welfare coordination strategies in matrix games and scales to complex Sequential Social Dilemmas (SSDs), outperforming strong MARL baselines in both sustainability and fairness.
\end{enumerate}

\section{Related Work}
\label{sec:related_work}
\textbf{Scalable Deep MARL.}
Centralized-training decentralized-execution (CTDE) has become a standard paradigm for scalable MARL, allowing methods such as MADDPG \cite{lowe2017multi} and COMA \cite{foerster2018counterfactual} to use centralized critics during training while preserving decentralized execution. 
Within this paradigm, COMA improves credit assignment through counterfactual baselines, while value-decomposition methods such as QMIX \cite{rashid2020monotonic} further improve scalability by factorizing the joint action-value function through a monotonic mixing assumption.
Recent PGM such as MAPPO \cite{yu2022surprising} also demonstrate strong empirical performance across cooperative MARL benchmarks.
Despite these advances, these methods remain primarily optimized for expected return. 
QMIX is further constrained by its monotonic mixing assumption, which can be restrictive in mixed-motive games, while PGMs do not explicitly impose regret-based equilibrium selection. 
Thus, they may achieve stable learning or high returns without controlling whether the selected outcome is socially desirable.

\noindent\textbf{Regret-Based Learning.}
Regret-minimization methods provide a principled route to equilibrium computation. 
Counterfactual Regret Minimization (CFR) and its neural variants, including Deep CFR and DREAM, minimize counterfactual regret and provide convergence guarantees primarily in two-player zero-sum imperfect-information games \cite{zinkevich2007regret,brown2019deep,steinberger2020dream}. 
Similarly, NeuRD connects policy-gradient learning to Hedge \cite{freund1997decision} and replicator dynamics in settings where the equilibrium objective is commonly formulated as approximate Nash convergence \cite{hennes2020neural}. 
In contrast, $\Phi$-AC targets general-sum MARL, where swap-regret-inspired learning naturally relates to correlated-equilibrium conditions and where the main challenge is selecting efficient CE rather than computing a zero-sum Nash solution.

\noindent\textbf{Positioning of $\Phi$-AC.}
$\Phi$-AC bridges these lines of work by combining scalable centralized-critic MARL with swap-regret-based equilibrium selection. 
Rather than only improving return maximization or Nash convergence, $\Phi$-AC uses learned swap-regret signals to steer policies toward Pareto-efficient CE while remaining applicable to high-dimensional mixed-motive environments.


\section{Preliminaries}
\label{sec:preliminaries}
We formulate the environment as a general-sum Markov game and study equilibrium selection through the lens of CE.

\subsection{General-Sum Markov Games}
\label{sec:mg_definition}
We model the multi-agent environment as a General-Sum Markov Game (MG), formally defined by the tuple
$\mathcal{G} =\langle\mathcal{N},\mathcal{S},\{\mathcal{A}_i\}_{i\in\mathcal{N}},
\mathcal{P},\{r_i\}_{i\in\mathcal{N}},\gamma,\Omega,\mathcal{O}\rangle.$
Here, $\mathcal{N}=\{1,\dots,N\}$ is the set of agents, $\mathcal{S}$ is the global state space, $\mathcal{P}:\mathcal{S}\times\mathcal{A}\rightarrow\Delta(\mathcal{S})$ denotes the state transition function, $r_i:\mathcal{S}\times\mathcal{A}\rightarrow\mathbb{R}$ is the reward function for agent $i$, and $\gamma\in[0,1)$ is the discount factor. 
We assume a partially observable setting where each agent $i$ receives a local observation $o_i \in \Omega_i$ through the observation function $\mathcal{O}:\mathcal{S}\times\mathcal{N}\rightarrow\Omega_i$.
The joint action is denoted by $\mathbf{a}=(a_i,\mathbf{a}_{-i})\in\mathcal{A}\equiv\prod_i\mathcal{A}_i$.

\subsection{Correlated Equilibrium and OSDP}
\label{sec:ce_osdp}

While NE assumes independent policies, CE \cite{aumann1974subjectivity} allows coordination via a correlation device, often enabling higher-welfare outcomes. 
In Markov games, however, equilibrium conditions involve deviations over sequential decision processes rather than a single normal-form interaction. 
The One-Shot Deviation Principle (OSDP) states that, under standard dynamic-game conditions, it is sufficient to check whether any agent can profit by deviating at a single decision point while following the original policy thereafter \cite{hendon1996one}. 
We use this principle as a local approximation of sequential deviation incentives.
In our setting, this amounts to measuring whether an agent can improve its return by unilaterally changing its current action, using the stationary Q-function $Q_i^{\boldsymbol{\pi}}(s,\mathbf{a})$ as a local estimate of deviation incentives.
We utilize the framework of $\Phi$-Equilibria \cite{greenwald2003general}. 
Let $\Phi_i$ be a set of deviation functions $\phi: \mathcal{A}_i \to \mathcal{A}_i$ for agent $i$, where each deviation maps an action $a_i$ to an alternative action $\phi(a_i)$.
\begin{definition}[$\Phi$-Equilibrium {\cite{greenwald2003general}}]
A joint policy $\boldsymbol{\pi}$ constitutes a $\Phi$-Equilibrium at state $s$ if no agent $i$ can improve their utility by applying any deviation $\phi \in \Phi_i$:
\begin{multline}
    \sum_{\mathbf{a} \in \mathcal{A}} \boldsymbol{\pi}(\mathbf{a}|s) Q_i^{\boldsymbol{\pi}}(s, \mathbf{a}) \geq 
    \sum_{\mathbf{a} \in \mathcal{A}} (\phi \circ_i \boldsymbol{\pi})(\mathbf{a}|s) Q_i^{\boldsymbol{\pi}}(s, \mathbf{a}), \\
    \forall i \in \mathcal{N}, \forall \phi \in \Phi_i.
    \label{eq:phi_eq_definition}
\end{multline}
\end{definition}
Here, $(\phi \circ_i \boldsymbol{\pi})$ denotes the joint distribution where agent $i$ follows the deviation $\phi$ while others follow $\boldsymbol{\pi}$.
When $\Phi_i$ includes all possible swap deviations, Eq.~\eqref{eq:phi_eq_definition} characterizes CE.

\subsection{Convergence via No-Swap-Regret Dynamics}
\label{sec:no_phi_regret}
A fundamental result in AGT links regret minimization to equilibrium convergence \cite{blum2007external,cesa2006prediction}. 
We define the instantaneous swap regret vector following standard swap-regret formulations \cite{hart2000simple,blum2007external}.
$\boldsymbol{\mathcal{R}}_{i}^{\Phi}(s, \mathbf{a}) \in \mathbb{R}^{|\mathcal{A}_i|}$ where each element corresponds to the counterfactual gain of swapping the chosen action $a_i$ to a specific alternative $a'_i \in \mathcal{A}_i$:
\begin{equation}
    \boldsymbol{\mathcal{R}}_{i}^{\Phi}(s, \mathbf{a})[a'] = \left[ Q_i^{\boldsymbol{\pi}}(s, (a', \mathbf{a}_{-i})) - Q_i^{\boldsymbol{\pi}}(s, \mathbf{a}) \right]^+.
    \label{eq:inst_regret_def}
\end{equation}
where $[\cdot]^+$ denotes the ReLU operator. This vector formulation allows the critic to estimate regret for all possible deviations simultaneously, which is crucial for scalable optimization.

\begin{proposition}[No-$\Phi$-Regret to Approximate $\Phi$-Equilibrium 
\cite{hart2000simple,greenwald2003general,blum2007external}]
\label{prop:no_phi_regret}
Under standard finite repeated-game dynamics, let $x_t \in \Delta(\mathcal{A})$ be the joint action distribution induced at round $t$, and let $u_i(x_t)$ denote agent $i$'s expected utility under $x_t$. Define
$R_T^{\Phi,i}
=
\max_{\phi \in \Phi_i}
\sum_{t=1}^T
\left[
u_i(\phi \circ_i x_t) - u_i(x_t)
\right].$
If every agent satisfies $R_T^{\Phi,i}/T \leq \epsilon$, then the empirical distribution
$\bar{x}_T = \frac{1}{T}\sum_{t=1}^T x_t$
is an $\epsilon$-$\Phi$-equilibrium. 
When $\Phi_i$ contains all swap deviations, $\bar{x}_T$ is an $\epsilon$-CE.
\end{proposition}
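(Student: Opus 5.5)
The plan is to exploit the fact that, in a finite normal-form (stage) game, the deviation gain $u_i(\phi \circ_i x) - u_i(x)$ is \emph{linear} in the joint distribution $x \in \Delta(\mathcal{A})$ for each fixed $\phi$. The proof then reduces to exchanging an average over rounds with this linear functional.

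\textbf{Step 1: linearity of the deviation gain.} First I would write out, for any $x \in \Delta(\mathcal{A})$ and $\phi \in \Phi_i$,
\begin{equation*}
u_i(\phi \circ_i x) - u_i(x) = \sum_{\mathbf{a}\in\mathcal{A}} x(\mathbf{a})\,\bigl[u_i(\phi(a_i),\mathbf{a}_{-i}) - u_i(\mathbf{a})\bigr].
\end{equation*}
This uses the definition of $(\phi\circ_i x)$ as the pushforward of $x$ under the map $\mathbf{a}\mapsto(\phi(a_i),\mathbf{a}_{-i})$. The right-hand side is linear in $x$. Call it $g_i^\phi(x)$.

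\textbf{Step 2: averaging.} By linearity,
\begin{equation*}
g_i^\phi(\bar{x}_T) = \frac{1}{T}\sum_{t=1}^T g_i^\phi(x_t).
\end{equation*}
It follows that
\begin{equation*}
\max_{\phi\in\Phi_i} g_i^\phi(\bar{x}_T) = \frac{1}{T}\max_{\phi\in\Phi_i}\sum_{t=1}^T g_i^\phi(x_t) = \frac{R_T^{\Phi,i}}{T} \le \epsilon .
\end{equation*}
Since this holds for every agent $i$, we get $u_i(\bar{x}_T) \ge u_i(\phi\circ_i \bar{x}_T) - \epsilon$ for all $i$ and all $\phi\in\Phi_i$. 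This is exactly the $\epsilon$-relaxation of Eq.~\eqref{eq:phi_eq_definition}, with the stage utilities $u_i$ playing the role of $Q_i^{\boldsymbol{\pi}}$. Hence $\bar{x}_T$ is an $\epsilon$-$\Phi$-equilibrium.

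\textbf{Step 3: the swap case.} When $\Phi_i$ is the set of all maps $\mathcal{A}_i\to\mathcal{A}_i$, the inequality for all $\phi$ is equivalent to the standard CE conditions. The direction from all-$\phi$ to pairwise follows by taking $\phi$ that swaps only $a_i\mapsto a_i'$. The converse follows because a general $\phi$'s gain decomposes as a sum over $a_i$ of pairwise gains. The conditions are therefore $\epsilon$-CE, with the usual caveat that the approximation constant is stated for the full swap class rather than per recommendation.

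\textbf{Main obstacle.} The mathematics is elementary. The only delicate point is that the $x_t$ may be correlated joint distributions (or products of independent mixed strategies), and $\bar{x}_T$ is generally \emph{not} a product distribution. I must therefore check that the definition of $\phi\circ_i x$ and of the equilibrium is stated for arbitrary joint distributions, so that the linearity argument is legitimate. I would make that convention explicit before Step 1.
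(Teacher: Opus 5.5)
Your proof is correct, and it does something the paper never does: it actually proves the stated implication. In the paper this proposition is only cited (Hart--Mas-Colell, Greenwald--Jafari, Blum--Mansour). The appendix's ``Proof Sketch of Proposition~1'' argues a different, perturbed statement involving the critic error $\epsilon_c$, the constraint slack $\delta_{\text{regret}}$ and the softmax temperature $\tau$. That sketch bounds the true average regret by the estimated regret plus $\epsilon_c$, uses $\|x\|_\infty \le \|x\|_2$ with the $L_2$ regret constraint, asserts an $\mathcal{O}(\tau \log|\mathcal{A}_i|)$ term, and lets $T\to\infty$. Its final step, ``small time-averaged swap regret $\Rightarrow$ $\epsilon$-CE'', is taken as a black box from the literature.

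Your linearity argument makes that black box explicit. Since $u_i(\phi\circ_i x)-u_i(x)$ is linear in the joint distribution $x$ for fixed $\phi$, the max-of-sums quantity $R_T^{\Phi,i}/T$ equals $\max_{\phi}\bigl[u_i(\phi\circ_i\bar{x}_T)-u_i(\bar{x}_T)\bigr]$. The conclusion therefore holds at every finite $T$ with the exact constant $\epsilon$, with no rates or limits.

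The paper's route gives an informal account of how errors from the learned critic and the smoothed actor inflate $\epsilon$. Yours gives a real proof, and it also shows why the statement must be confined to repeated games with fixed stage utilities. With a policy-dependent $Q_i^{\boldsymbol{\pi}}$ that changes across rounds, there is no single utility against which to evaluate $\bar{x}_T$, so averaging no longer commutes with the deviation gain.

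One minor precision in Step~3: you only need the direction from the full-swap condition to the pairwise CE conditions, and it holds with the same $\epsilon$. The converse loses a factor of up to $|\mathcal{A}_i|$, so ``equivalent'' is exact only at $\epsilon=0$, as your caveat already hints.
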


\paragraph{The Selection Gap.} 
While Proposition~\ref{prop:no_phi_regret} establishes that minimizing swap regret leads to approximate CE behavior, it does not determine which equilibrium is selected. In general-sum games, some equilibria may still be socially inefficient or unfair, necessitating an explicit selection mechanism. In Section~\ref{sec:methods}, we address this challenge via a constrained Markov game formulation using a Lagrangian objective that favors socially efficient equilibria.


\section{Methodology: $\Phi$-Actor-Critic}
\label{sec:methods}
We propose $\Phi$-AC, a deep MARL framework designed to encourage efficient coordination through swap-regret-based learning.
The framework consists of three components: scalable regret estimation (Sec.~\ref{sec:critic_architecture}), constrained equilibrium selection (Sec.~\ref{sec:lagrangian_selection}), and exploration dynamics (Sec.~\ref{sec:entropy_dynamics}). The overall centralized training optimization orchestrating these components is summarized in Algorithm~\ref{alg:phi_ac_optimization}.

\subsection{Scalability: The Regret-Conditioned Critic}
\label{sec:critic_architecture}

\paragraph{From Simulation to Prediction.} 
The theoretical convergence to CE relies on calculating \textit{Instantaneous Swap Regret} (Eq.~\ref{eq:inst_regret_def}). In tabular settings with a simulator, one can simply reset the environment to query ``what if'' for every unchosen action ($O(N|\mathcal{A}|)$ complexity). However, in real-time deployment where resetting is impossible, explicitly computing these counterfactual rewards becomes prohibitively expensive.

$\Phi$-AC solves this bottleneck by training a centralized attention mechanism inspired by MAAC \cite{iqbal2019actor} to explicitly \textit{predict} these counterfactual values. This provides dense regret estimates through a single forward pass, improving scalability in high-dimensional environments.
As depicted in Figure~\ref{fig:Phi_AC_structure}, we design a dual-head critic to jointly model value estimation and swap-regret prediction.
To account for non-stationary learning dynamics induced by concurrently adapting agents, we condition the critic on running cumulative regret $\boldsymbol{\mathcal{R}}^{\mathrm{cum}}$ via a Feature-wise Linear Modulation (FiLM) layer, which dynamically scales and shifts critic features \cite{perez2018film}.

The critic consists of two heads:
\begin{itemize}
    \item \textbf{Q-Head:} Estimates action-values $Q^\psi_i(s, \mathbf{a}, \boldsymbol{\mathcal{R}}^{\text{cum}})$.
    \item \textbf{Regret-Head:} Predicts the regret vector $\hat{\boldsymbol{\mathcal{R}}}^\psi_i \in \mathbb{R}^{|\mathcal{A}_i|}$, where each element approximates the counterfactual gain of switching from $a_i$ to $a'_i$.
\end{itemize}
\noindent To stabilize regret estimation, we compute the regression target $\boldsymbol{\mathcal{R}}^{\text{target}}_i$ using a Target Network $\psi'$ with parameters lagging $\psi$:
\begin{equation}
    \boldsymbol{\mathcal{R}}^{\text{target}}_i(a') = \left[ Q^{\psi'}_i(s, (a', \mathbf{a}_{-i})) - Q^{\psi'}_i(s, \mathbf{a}) \right]^+.
\end{equation}
The critic minimizes the joint loss, including 

$\mathcal{L}_R = \frac{1}{N} \sum_i \|  \boldsymbol{\mathcal{R}}^{\text{target}}_i-\hat{\boldsymbol{\mathcal{R}}}^\psi_i \|_2^2$.

\begin{algorithm}[tb]
\small
\caption{Centralized Training Optimization in $\Phi$-AC}
\label{alg:phi_ac_optimization}
\begin{algorithmic}[1]
\State \textbf{Input:} Minibatch $\mathcal{B} = (\mathbf{o}, \mathbf{a}, \mathbf{r}, \mathbf{o}')$ from $\mathcal{D}$, Current Cumulative Regret $\boldsymbol{\mathcal{R}}^{\text{cum}}$.
\State \textbf{Hyperparameters:} $\delta_{\text{regret}}$ (Tolerance), $\delta$ (EMA Decay rate), $\eta_\alpha$ (Dual LR).

\Statex \textit{// 1. Regret-Conditioned Critic Update} (Sec.~\ref{sec:critic_architecture})

\State Calculate TD targets $y_i$ and regret targets $\boldsymbol{\mathcal{R}}^{\text{target}}_i$ using target networks $\psi', \theta'$.

\State Get outputs from dual-head Critic conditioned on $\boldsymbol{\mathcal{R}}^{\text{cum}}$ via FiLM:
\State \hspace{\algorithmicindent} $\{Q^{\psi}_i, \hat{\boldsymbol{\mathcal{R}}}^{\psi}_i\} \leftarrow \text{Critic}(\mathbf{o}, \mathbf{a}, \boldsymbol{\mathcal{R}}^{\text{cum}})$

\State Update Critic $\psi$ by minimizing the joint loss:
\State \hspace{\algorithmicindent} $\mathcal{L}_{\text{Critic}} = \frac{1}{N} \sum_i \Big( \mathcal{L}_Q(Q^{\psi}_i, y_i) + \mathcal{L}_R(\hat{\boldsymbol{\mathcal{R}}}^{\psi}_i, \boldsymbol{\mathcal{R}}^{\text{target}}_i) \Big)$

\Statex \textit{// 2. Actor \& Dual Update via RB-SWO} (Sec.~\ref{sec:lagrangian_selection})

Sample differentiable actions $\tilde{\mathbf{a}} \sim \boldsymbol{\pi}_{\theta}(\mathbf{o})$ via Gumbel-Softmax using regret-biased logits (Eq.~\eqref{eq:Regret_Matching})

\State Get Critic predictions $\{Q^{\psi}_i(\tilde{\mathbf{a}}), \hat{\boldsymbol{\mathcal{R}}}^{\psi}_i(\tilde{\mathbf{a}})\}$ evaluated at the sampled actions.

\State Calculate Regret Magnitudes $\mathcal{M}_i \leftarrow \| [\hat{\boldsymbol{\mathcal{R}}}^{\psi}_i(\tilde{\mathbf{a}})]^+ \|_2$.
\State \textbf{[Primal]} Update Actor $\theta$ by minimizing:
\State \hspace{\algorithmicindent} $\mathcal{L}_{\text{total}} \leftarrow \mathcal{L}_{\text{RB-SWO}}(\theta,\boldsymbol{\alpha})$ (Eq.~\eqref{eq:lagrangian_loss})

\State \textbf{[Dual]} Update multipliers $\alpha_i^{\text{fair}}, \alpha_i^{\text{ent}}$ via gradient ascent:
\State \hspace{\algorithmicindent} $\alpha_i^{\text{fair}} \leftarrow \left[ \alpha_i^{\text{fair}} + \eta_\alpha (\mathcal{M}_i - \delta_{\text{regret}}) \right]^+$
\State \hspace{\algorithmicindent} $\alpha_i^{\text{ent}} \leftarrow \left[ \alpha_i^{\text{ent}} + \eta_\alpha (\mathcal{H}^{\text{target}}_i - \mathcal{H}(\pi_{\theta_i})) \right]^+$

\Statex \textit{// 3. Cumulative Regret \& Target Networks Update} (Sec.~\ref{sec:entropy_dynamics})

\State Update Cumulative Regret using Exponential Moving Average (EMA):
\State \hspace{\algorithmicindent} $\boldsymbol{\mathcal{R}}^{\text{cum}}_i \leftarrow \delta \boldsymbol{\mathcal{R}}^{\text{cum}}_i + (1 - \delta) \hat{\boldsymbol{\mathcal{R}}}^{\psi}_i(\mathbf{a})$
\State Soft update target networks $\psi'$ and $\theta'$.
\end{algorithmic}
\end{algorithm}

\subsection{Selection: Lagrangian Dual Optimization}
\label{sec:lagrangian_selection}
Minimizing swap regret encourages stability, but it does not determine which equilibrium is selected when multiple low-regret equilibria exist. 
To address this selection gap, we introduce the Regret-Balancing Social Welfare Objective (RB-SWO), which maximizes collective welfare while constraining each agent's regret magnitude. 
This formulation favors high-welfare coordination without allowing the policy to move far from the low-regret region.

\paragraph{Formulation as Constrained Optimization.}
We formalize RB-SWO as a constrained optimization problem over expected discounted returns.
We explicitly define the \textit{Regret Magnitude} $\mathcal{M}_i(\boldsymbol{\pi})$ as the expected $L_2$-norm of the positive swap regret vector:
\begin{equation}
    \mathcal{M}_i(\boldsymbol{\pi}) = \mathbb{E} \left[ \left\| \left[ \hat{\boldsymbol{\mathcal{R}}}^\psi_i(s, \mathbf{a}) \right]^+ \right\|_2 \right]
\end{equation}
We also define the policy entropy as 
$\mathcal{H}(\pi_i)=\mathbb{E}_{a_i\sim\pi_i}[-\log \pi_i(a_i|o_i)]$, 
with $\mathcal{H}^{\mathrm{target}}_i(t)$ denoting a time-varying minimum entropy target.

\noindent The optimization problem is:
\begin{align}
    \max_{\boldsymbol{\pi}} \quad 
    & J(\boldsymbol{\pi}) 
    = \mathbb{E}_{s \sim d^{\boldsymbol{\pi}}, \mathbf{a} \sim \boldsymbol{\pi}}
    \left[ \sum_{i \in \mathcal{N}} Q_i^{\boldsymbol{\pi}}(s,\mathbf{a}) \right] \\
    \text{s.t.} \quad 
    & \mathcal{M}_i(\boldsymbol{\pi}) \leq \delta_{\text{regret}}, \quad \forall i \in \mathcal{N} \\
    & \mathcal{H}(\pi_i) \geq \mathcal{H}^{\text{target}}_i(t), \quad \forall i \in \mathcal{N}.
\end{align}

\begin{figure}[t]
    \centering
    \includegraphics[width=0.495\textwidth]{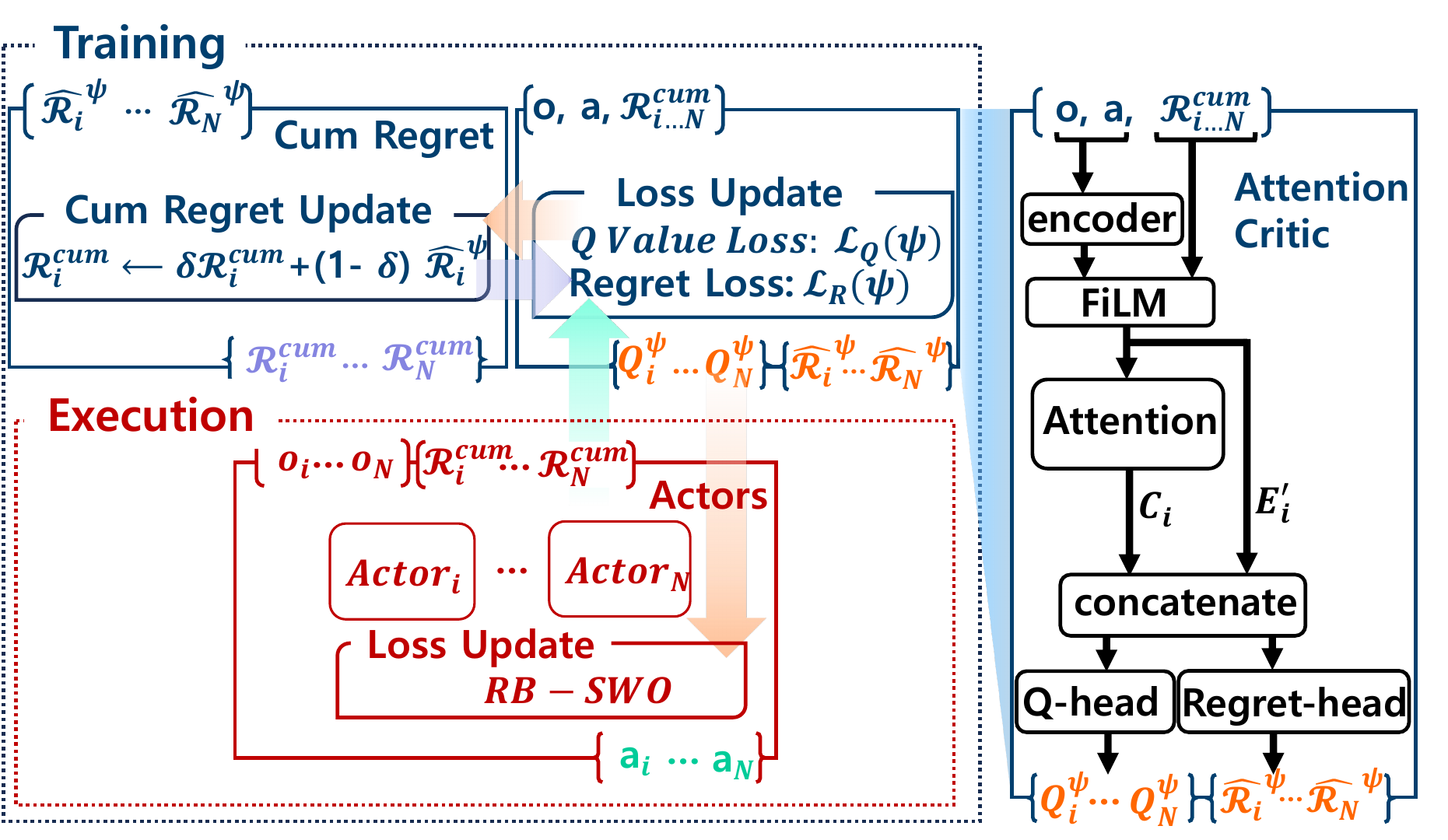}
    \caption{
        The architecture of $\Phi$-AC. 
        (Left) Decentralized actors select actions using local observations and cumulative regret bias. 
        (Right) The centralized attention critic processes global states and cumulative regret to output both Q-values and explicit regret predictions. 
        (Center) The Regret Coordinator modulates the critic via FiLM based on the cumulative regret feedback loop.}
    \label{fig:Phi_AC_structure}
\end{figure}
where $\delta_{\text{regret}}$ controls the allowed regret magnitude, and $\mathcal{H}^{\mathrm{target}}_i(t)$ is annealed over training.
We employ the Gumbel-Softmax reparameterization trick \cite{jang2016categorical} to enable differentiable optimization of the regret constraint.

\paragraph{Lagrangian Dual Optimization.}
We solve this via the primal-dual method. Introducing the dual variables $\boldsymbol{\alpha}=\{\alpha_i^{\text{fair}},\alpha_i^{\text{ent}}\}_{i=1}^N,$
where $\alpha_i^{\text{fair}} \ge 0$ controls regret-based regularization and $\alpha_i^{\text{ent}} \ge 0$ encourages exploration, we minimize the following objective for actor parameters $\theta$:
\begin{equation}
    \begin{split}
        \mathcal{L}_{\text{total}}(\theta, \boldsymbol{\alpha}) = \mathbb{E} \Bigg[ 
        & \underbrace{- \sum_{i} Q^{\boldsymbol{\pi}}_i}_{\text{Max Welfare}} 
        + \sum_{i} \alpha_i^{\text{fair}} (\mathcal{M}_i - \delta_{\text{regret}}) \\
        & + \sum_{i} \alpha_i^{\text{ent}} 
        ( \mathcal{H}^{\text{target}}_i - \mathcal{H}(\pi_i) ) 
        \Bigg].
    \end{split}
    \label{eq:lagrangian_loss}
\end{equation}

The optimization proceeds as an alternating game: 
\begin{itemize}
    \item \textbf{Primal Update (Actor):} Update $\theta$ to minimize $\mathcal{L}_{\text{total}}$.
    \item \textbf{Dual Update (Multipliers):} Update multipliers via gradient ascent with step size $\eta_{\alpha} > 0$: 
    \begin{equation}
        \alpha_i^{\text{fair}} \leftarrow \left[ \alpha_i^{\text{fair}} + \eta_{\alpha} (\mathcal{M}_i - \delta_{\text{regret}}) \right]^+.
    \end{equation}
\end{itemize}
When regret increases, the corresponding dual variable also increases, placing greater emphasis on stability during optimization.

\subsection{Dynamics: Entropy Annealing \& Warm-up}
\label{sec:entropy_dynamics}

A practical challenge in regret-based learning is that agents cannot regret actions they have never explored.
To resolve this, we implement an explicit schedule:
\paragraph{Discovery (Warm-up).} For the first $T_{\text{warm}}$ steps (a predefined hyperparameter), we suppress the fairness penalty ($\alpha^{\text{fair}} \to 0$) and enforce a high entropy target. This forces agents to explore the state space purely via Maximum Entropy RL (MaxEnt RL) \cite{haarnoja2018soft}, accumulating the reward experiences necessary to define well-defined regret.
\paragraph{Convergence (Annealing).} Post warm-up, we activate the regret constraints and linearly decay the entropy, encouraging diverse exploration before convergence to stable coordinated behaviors.


\section{Theoretical Analysis}
\label{sec:theoretical_analysis}

In this section, we provide a theoretical interpretation of $\Phi$-AC by connecting its policy update to no-swap-regret learning and by explaining how the proposed Lagrangian objective biases equilibrium selection. Rather than relying on exact tabular regret computation, $\Phi$-AC uses a learned critic to approximate regret signals in a differentiable deep MARL setting.

\subsection{Approximating Regret Matching via Deep Learning}
\label{sec:theory_approximation}
We next interpret the actor update in $\Phi$-AC through the lens of regret matching.

\paragraph{Actor as Smooth Regret Matching.}
Classically, convergence to CE can be achieved via Regret Matching (RM) \cite{hart2000simple}, where agents sample actions proportionally to their positive regret values, i.e., $\pi(a_i) \propto [\boldsymbol{\mathcal{R}}^{\Phi}_i(a_i)]^+$. However, the non-differentiability of the standard RM operator makes it unsuitable for gradient-based optimization in deep networks.

To bridge this gap, $\Phi$-AC operationalizes RM via a Regret-Biased Softmax Policy. By injecting the predicted cumulative regret $\hat{\boldsymbol{\mathcal{R}}}^{\text{cum}}$ into the logits, the policy becomes:
\begin{equation}
    \pi_{\theta}(a_i|{o_i}) = \frac{\exp\left( (L_i({o_i},a_i) + \beta [\hat{\boldsymbol{\mathcal{R}}}^{\text{cum}}_i(a_i)]^+ ) / \tau \right)}{\sum_{a'} \exp\left( (L_i({o_i},a') + \beta [\hat{\boldsymbol{\mathcal{R}}}^{\text{cum}}_i(a')]^+ ) / \tau \right)}.
    \label{eq:Regret_Matching}
\end{equation}
Here, $L_i(o_i, a_i)$ denotes the raw logits output by the actor network prior to softmax normalization, $\tau$ is the temperature, and $\beta$ controls the strength of the regret bias. 
Lower temperatures make the policy more sensitive to positive regret, while larger $\beta$ places greater weight on actions with accumulated regret. 
Together, these terms yield a smooth and differentiable approximation to regret-driven action selection.

This formulation can be interpreted as a smooth approximation of RM. While standard RM selects actions in proportion to positive regret, our softmax-based formulation is conceptually related to exponential weights and Hedge algorithms \cite{freund1997decision,cesa2006prediction}, which are also known to induce no-regret behavior under standard online learning assumptions.

\paragraph{Approximate Regret under Function Approximation.}
In deep MARL, regret estimates are inherently approximate due to function approximation. 
Following standard analyses of perturbed no-regret dynamics \cite{cesa2006prediction}, we consider the setting where the centralized critic provides uniformly bounded regret estimation error:
$
\max_{s,\mathbf{a}}
\|
\boldsymbol{\mathcal{R}}^{\mathrm{true}}
-
\hat{\boldsymbol{\mathcal{R}}}^{\psi}
\|_\infty
\leq \epsilon_c.
$
Under policy updates that are sufficiently smooth with respect to the regret estimates, the induced perturbation scales with $\epsilon_c$, implying that the resulting average swap regret remains bounded up to an $\mathcal{O}(\epsilon_c)$ approximation term.
Consequently, the learned dynamics may approach an approximate CE despite imperfect regret estimation.

\subsection{Geometry of Equilibrium Selection}
\label{sec:theory_selection}
While regret minimization encourages behavior consistent with approximate CE conditions, it does not by itself distinguish high-welfare equilibria from inefficient low-regret outcomes. 
RB-SWO addresses this selection gap by adding welfare maximization under individual regret constraints, yielding a Lagrangian objective that favors high-welfare coordination.

\paragraph{Dual Forces in Optimization.}
We formulate equilibrium selection as maximizing social welfare ($W = \sum Q_i$) subject to stability constraints ($\mathcal{M}_i \le \delta_{\text{regret}}$). The Lagrangian objective $\mathcal{L}(\theta, \alpha) = W - \sum \alpha_i (\mathcal{M}_i - \delta_{\text{regret}})$ induces two distinct gradient forces:

\noindent\textbf{Efficiency Force ($\nabla_{\theta} W$):} This term drives the joint policy along the ``flat" directions of the CE polytope (where regret is zero) toward higher-welfare regions of the CE set.

\noindent\textbf{Restoring Force ($-\alpha_i \nabla_{\theta} \mathcal{M}_i$):} If an agent deviates from equilibrium (increasing $\mathcal{M}_i$), the dual variable $\alpha_i$ increases (Dual optimization). This increases the optimization pressure toward lower-regret regions.

\paragraph{Fairness via Individual Constraints.}
Individual regret constraints also play an important role in equilibrium selection.
Geometrically, asymmetric equilibria (where one agent yields and another benefits) imply that while the sum of regrets might be low, the individual regret for the yielding agent is high. 
By enforcing $\mathcal{M}_i \le \delta$ for each agent rather than only controlling the average regret, RB-SWO discourages solutions in which one agent absorbs most of the deviation incentive. This encourages more balanced outcomes and reduces winner-takes-all behavior.

\section{Experimental Evaluation}
\label{sec:experiments}
While the previous sections introduced the regret-aware architecture and optimization framework of $\Phi$-AC, this section empirically evaluates its ability to learn socially efficient coordination. 
We evaluate $\Phi$-AC across a hierarchy of complexity: diagnostic matrix games, scalable MPE domains, and Melting Pot Harvest. 
These settings respectively test equilibrium selection in transparent games, scalability in higher-dimensional interactions, and sustainable coordination under shared resource depletion.

We compare $\Phi$-AC against representative MARL baselines that capture different algorithmic design choices. 
MADDPG represents deterministic policy-gradient learning, MAPPO and COMA represent stochastic policy-gradient methods with centralized training, and QMIX represents value decomposition under monotonic factorization.

\paragraph{Experimental Setup.}
Training configurations were tailored to the complexity of each domain: Iterated Matrix Games (IMG) were trained for 1k episodes with a horizon of 25 steps, MPE environments \cite{lowe2017multi} for 30k episodes (horizon 25), and Melting Pot (Harvest) \cite{leibo2021scalable} for 10k episodes (horizon 500). To ensure statistical robustness, all results for MPE and Melting Pot are reported as mean $\pm$ standard deviations over 5 independent runs.

\begin{figure}[t]
    \centering
    \includegraphics[width=0.495\columnwidth]{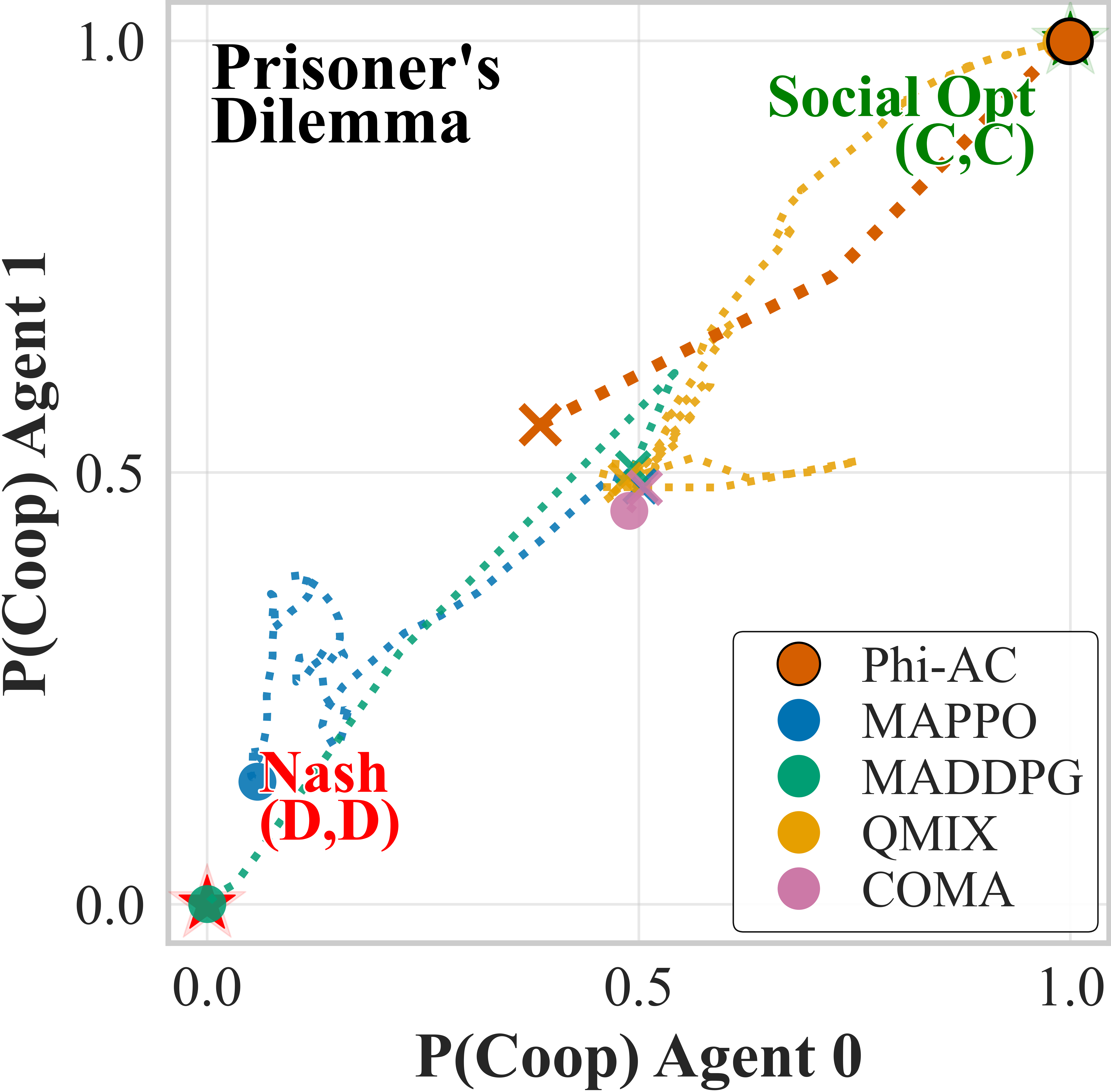}%
    \hfill%
    \includegraphics[width=0.495\columnwidth]{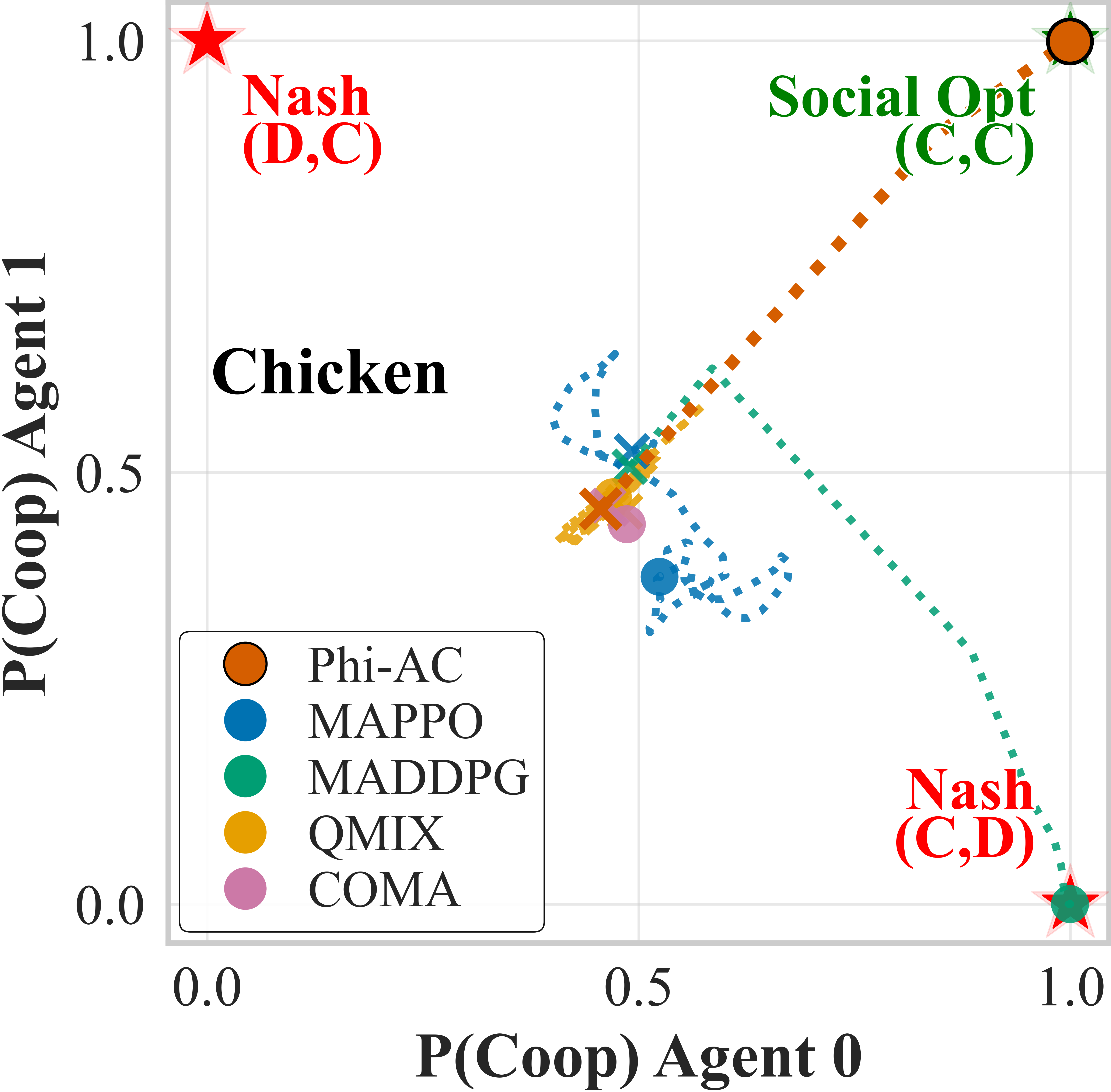}
    
    \includegraphics[width=0.495\columnwidth]{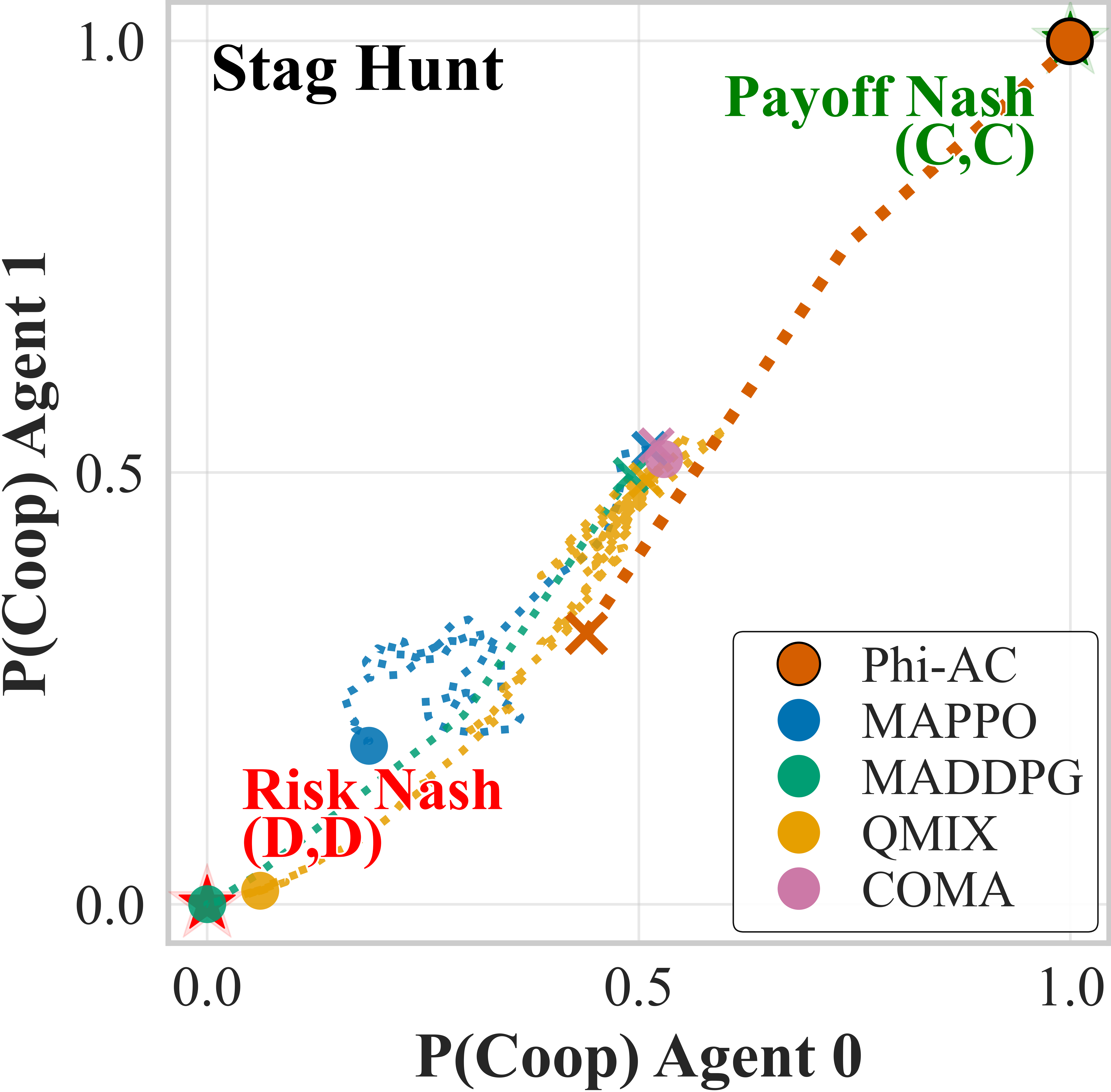}
    
    \caption{Policy Trajectories. (Top-L) Prisoner's Dilemma, (Top-R) Chicken, (Bot) Stag Hunt. $\Phi$-AC (Red) consistently reaches the intended high-welfare equilibria.}
    \label{fig:matrix_trajectories}
\end{figure}

\subsection{Diagnostic Analysis: Iterated Matrix Games}
\label{sec:matrix_games}

We first employ Iterated Matrix Games to empirically verify the equilibrium convergence of each model. We consider a two-player setting where agents interact repeatedly, allowing them to adapt strategies based on history. We select three canonical games, each representing a distinct coordination failure mode. Fig.~\ref{fig:matrix_trajectories} visualizes the learning trajectories.

\paragraph{Prisoner's Dilemma.}
This game tests whether agents can overcome the temptation to defect and coordinate on mutual cooperation. 
As shown in Fig.~\ref{fig:matrix_trajectories} (Top-L), MADDPG and MAPPO tend to converge toward the inefficient NE (D,D), while QMIX and $\Phi$-AC reach the social optimum (C,C). Unlike QMIX, which benefits from the monotonic payoff structure in this setting, $\Phi$-AC reaches mutual cooperation through regret-aware coordination.

\paragraph{Chicken.}
This game tests coordination under two asymmetric equilibria, where either agent can benefit if the other yields. 
As shown in Fig.~\ref{fig:matrix_trajectories} (Top-R), baselines often drift toward one-sided or unstable outcomes, whereas $\Phi$-AC promotes balanced coordination by reducing regret disparities between agents.

\paragraph{Stag Hunt.}
This game tests whether agents can coordinate on a high-payoff action that is beneficial only when both agents choose it, rather than falling back to the safer but lower-reward option. 
As shown in Fig.~\ref{fig:matrix_trajectories} (Bottom), baselines tend to favor the conservative risk-dominant outcome, while $\Phi$-AC consistently reaches the payoff-dominant cooperative equilibrium.

\begin{figure}[t]
    \centering
{\footnotesize
\begin{tabular}{@{}ccc@{}}
    \textcolor{maddpgcolor}{\rule{1.0em}{2.0pt}} MADDPG &
    \textcolor{mappocolor}{\rule{1.0em}{2.0pt}} MAPPO &
    \textcolor{qmixcolor}{\rule{1.0em}{2.0pt}} QMIX \\
    \multicolumn{3}{c}{
        \textcolor{comacolor}{\rule{1.0em}{2.0pt}} COMA
        \quad
        \textcolor{phiaccolor}{\rule{1.0em}{2.0pt}} $\Phi$-AC
}
\end{tabular}
}
    
    \begin{subfigure}[t]{0.49\linewidth}
        \centering
        \includegraphics[width=\linewidth]{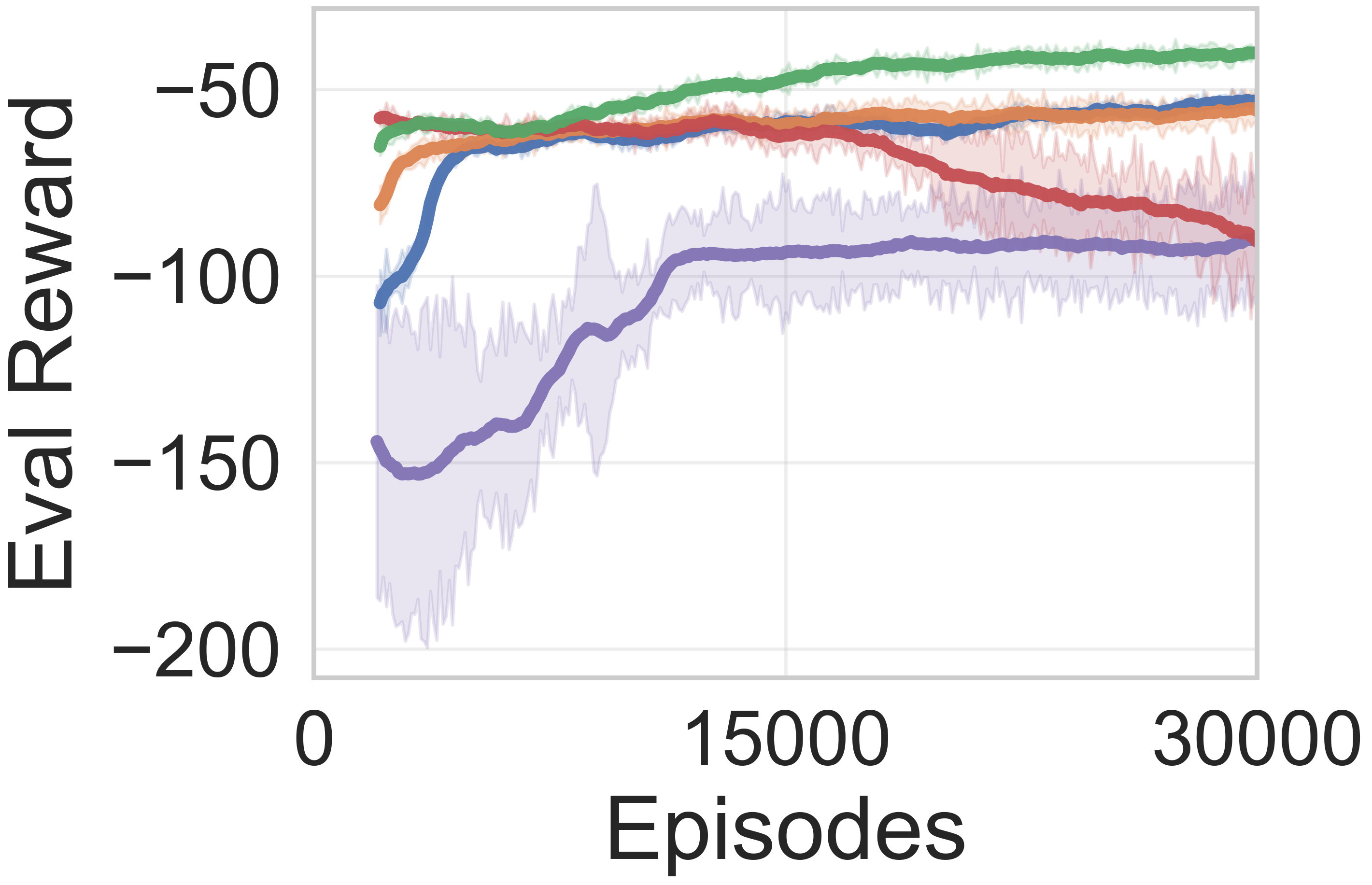}
        \caption{Coop: Reward}
        \label{fig:coop_reward}
    \end{subfigure}
    \hfill
    \begin{subfigure}[t]{0.49\linewidth}
        \centering
        \includegraphics[width=\linewidth]{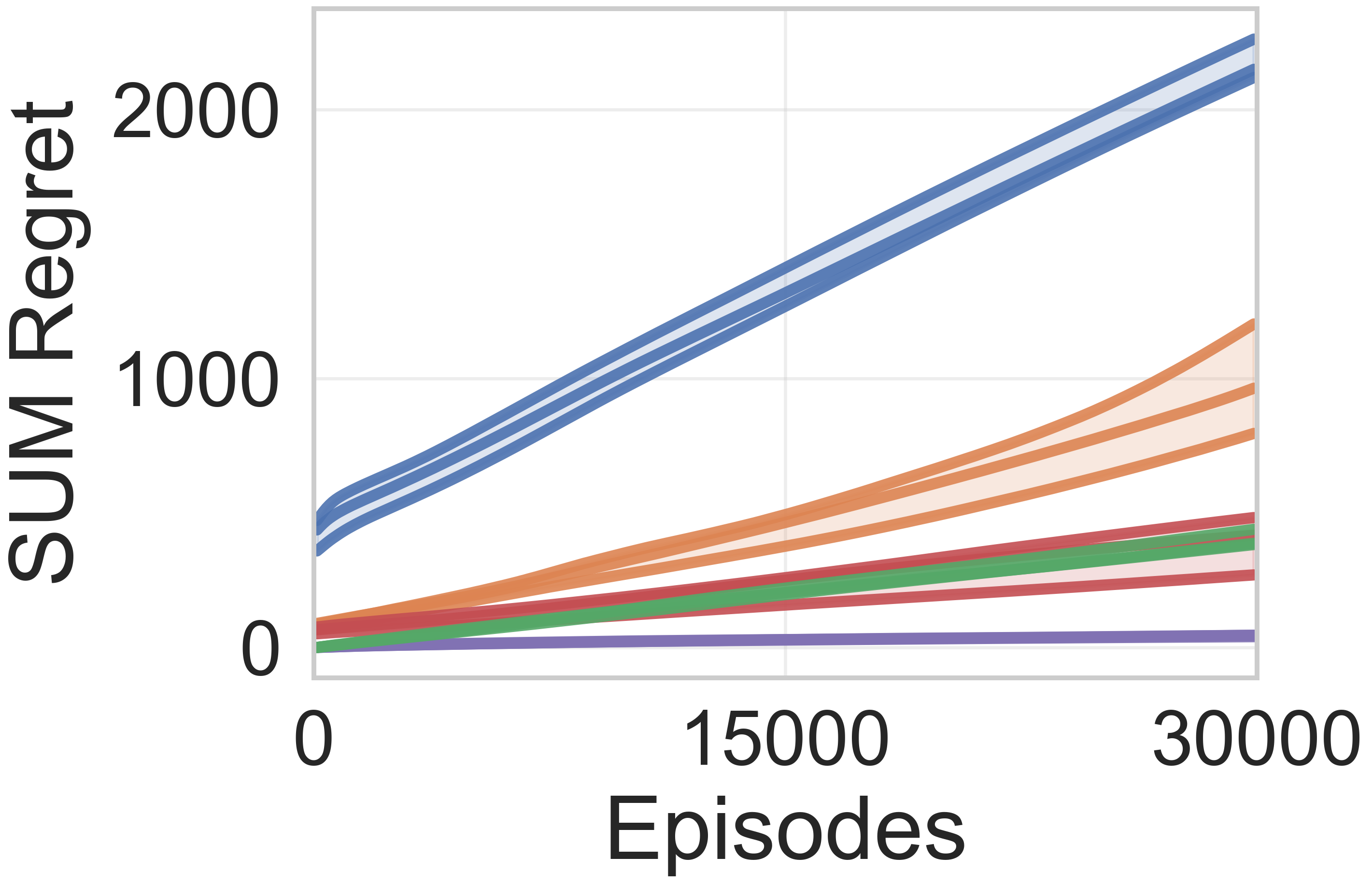}
        \caption{Coop: Regret}
        \label{fig:coop_regret}
    \end{subfigure}

    \begin{subfigure}[t]{0.49\linewidth}
        \centering
        \includegraphics[width=\linewidth]{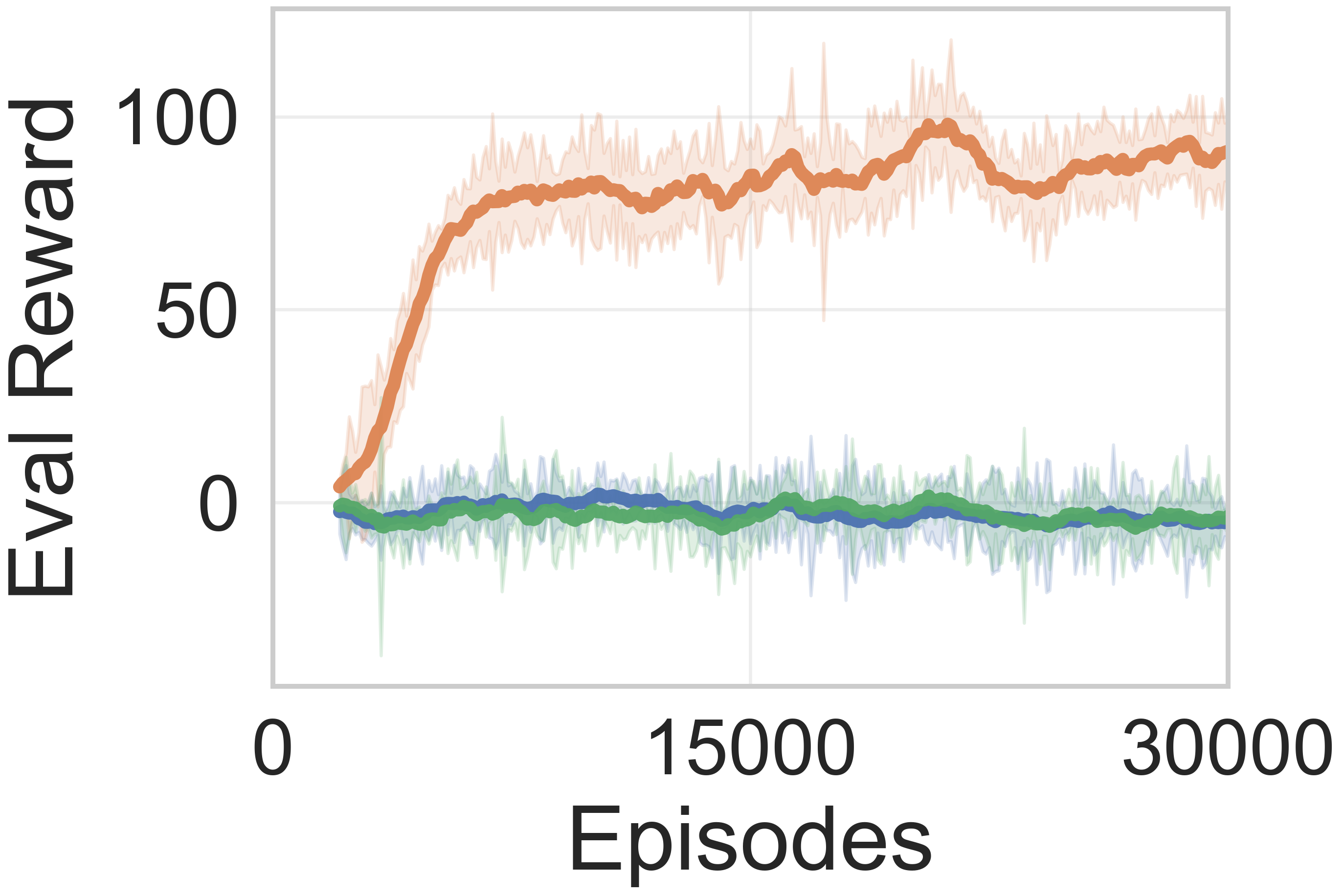}
        \caption{Zero-Sum: Reward}
        \label{fig:adv_reward}
    \end{subfigure}
    \hfill
    \begin{subfigure}[t]{0.49\linewidth}
        \centering
        \includegraphics[width=\linewidth]{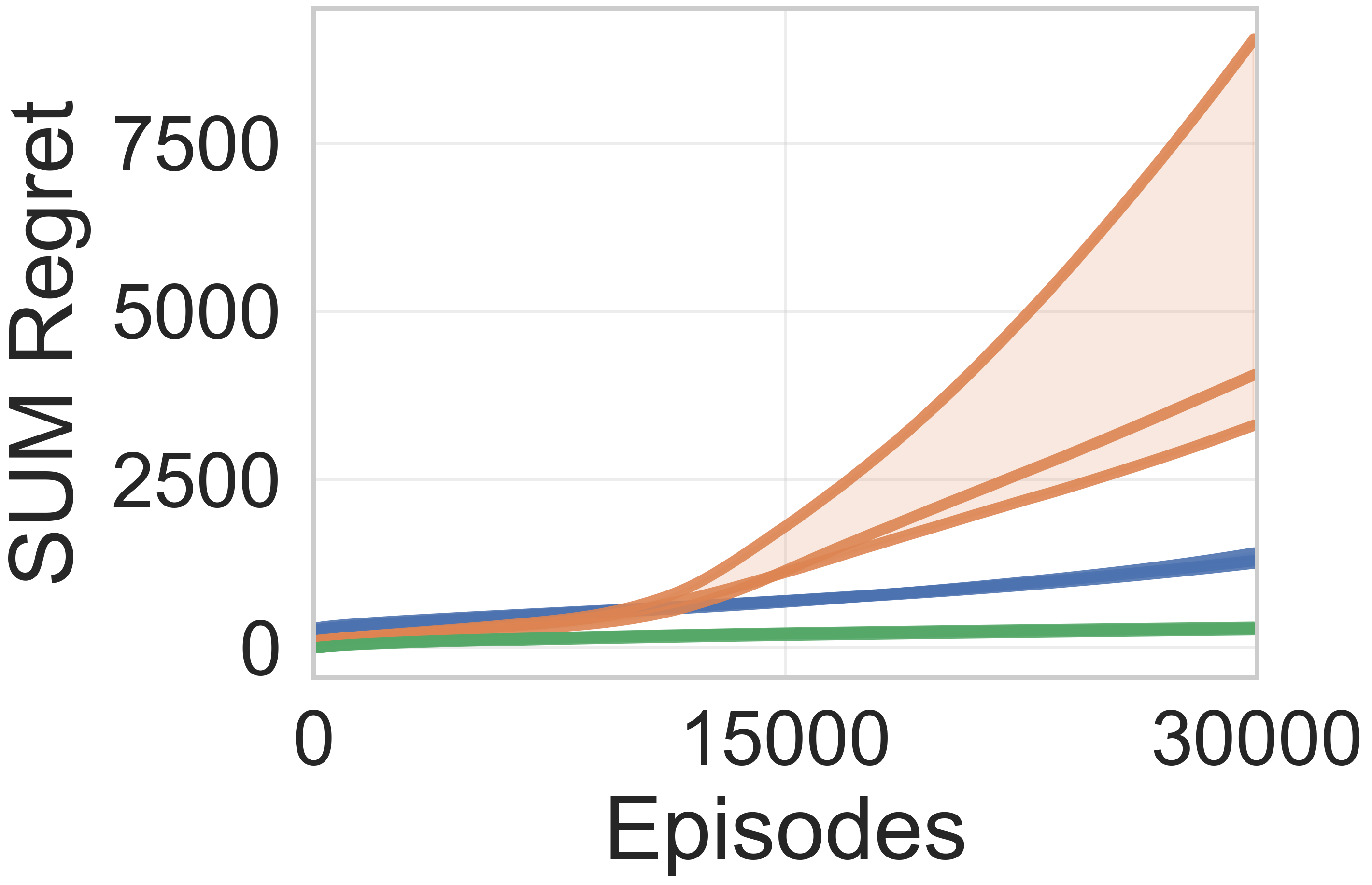}
        \caption{Zero-Sum: Regret}
        \label{fig:adv_regret}
    \end{subfigure}

    \begin{subfigure}[t]{0.49\linewidth}
        \centering
        \includegraphics[width=\linewidth]{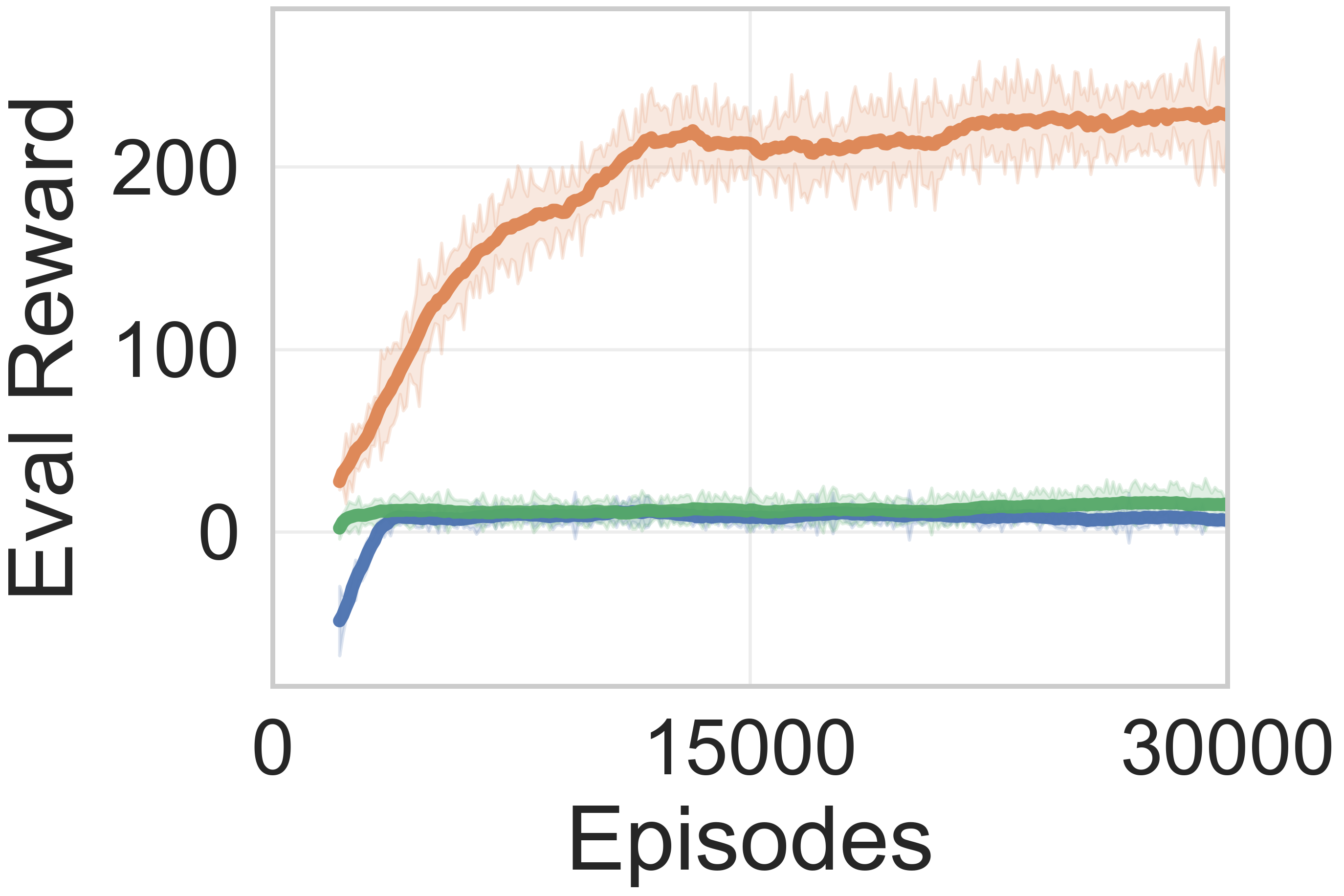}
        \caption{Mixed: Reward}
        \label{fig:mixed_reward}
    \end{subfigure}
    \hfill
    \begin{subfigure}[t]{0.49\linewidth}
        \centering
        \includegraphics[width=\linewidth]{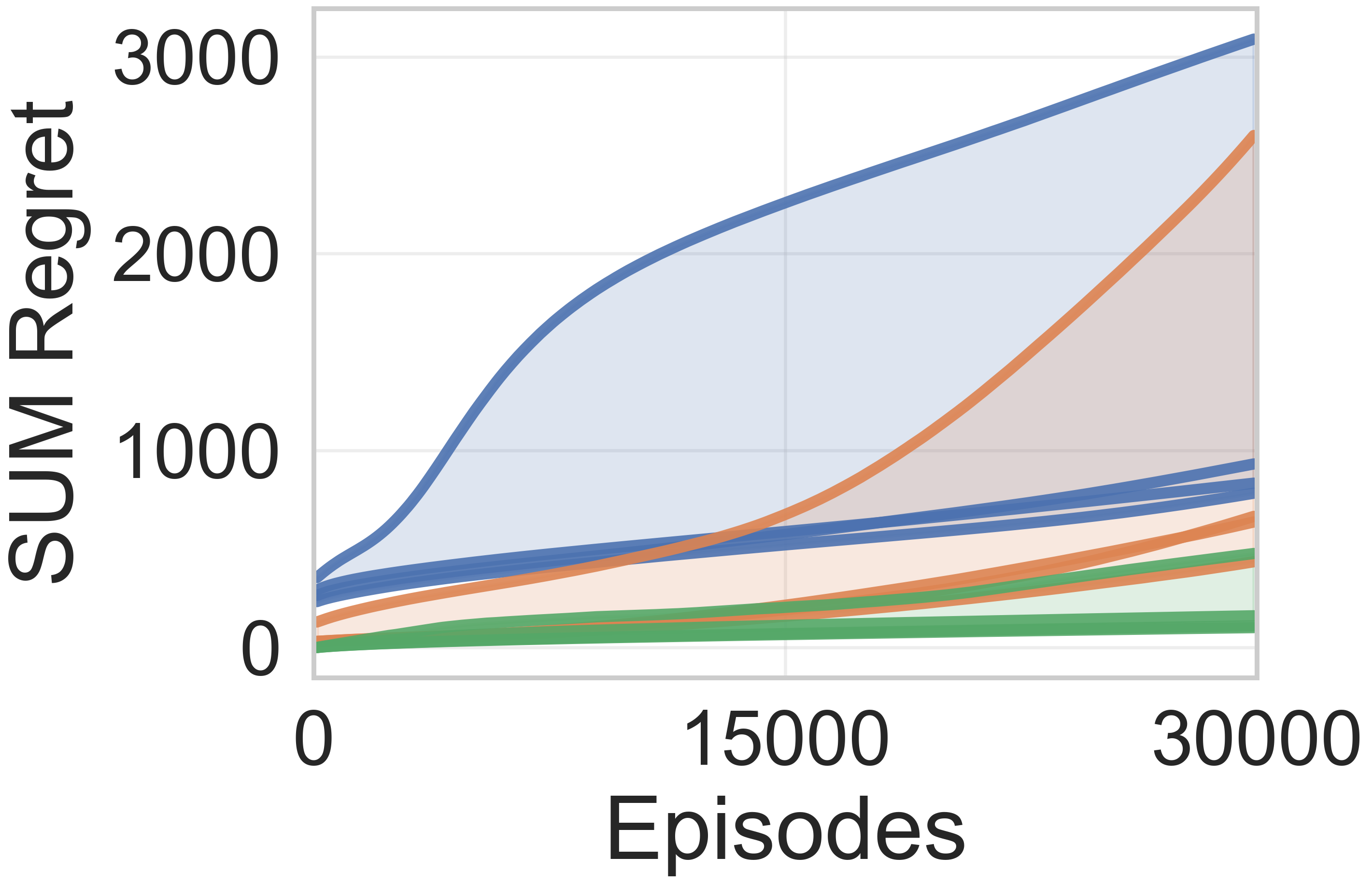}
        \caption{Mixed: Regret}
        \label{fig:mixed_regret}
    \end{subfigure}
    \caption{Training Dynamics. Rows represent environments: Cooperative, Zero-Sum, and Mixed. Left panels show evaluation rewards, and right panels show cumulative regret.}
    \label{fig:all_results_dense}
\end{figure}

\subsection{Selection under Scale: MPE}
\label{sec:mpe_experiments}

We extend our evaluation to the MPE to verify whether the selection capability scales to high-dimensional state-action spaces via our centralized attention critic. We report Mean Reward (efficiency), Cumulative Regret (Convergence), and Regret Gap ($|\mathcal{R}^{\text{cum}}_i - \mathcal{R}^{\text{cum}}_j|$, fairness). Table~\ref{tab:final_results} summarizes the quantitative results, and Figure~\ref{fig:all_results_dense} visualizes the training dynamics.

\paragraph{Cooperative: Avoiding Inefficient Equilibria.}
In \texttt{simple\_spread}, agents must cover landmarks while avoiding collisions. COMA and QMIX achieve low regret but converge to lower-welfare solutions, suggesting that stability alone does not imply efficient equilibrium selection. In contrast, $\Phi$-AC achieves the highest reward (-42.24). By balancing welfare maximization with regret constraints, it avoids premature stabilization and reaches a higher-welfare configuration.

\paragraph{Zero-Sum: Stability vs. Exploitation.}
High individual rewards in \texttt{simple\_adversary} often signal exploitation rather than equilibrium. MAPPO achieves a high reward (91.17) but suffers from large regret gap (5926.82), indicating a non-stationary cycle where one agent continually exploits a weak opponent without convergence. $\Phi$-AC converges to a reward near zero (-3.27) with significantly lower regret, suggesting a more stable low-regret regime in which unilateral exploitation incentives are reduced.

\paragraph{Mixed-Motive: Scalable Fairness.}
\texttt{simple\_tag} involves conflicting goals within a general-sum framework. Baselines exhibit large regret gaps ($>2000$), indicating highly imbalanced deviation incentives consistent with winner-takes-all behavior. $\Phi$-AC reduces this gap by nearly two orders of magnitude (53.79). This suggests that the RB-SWO mechanism can scale to high-dimensional domains, enforcing equity by ensuring that the dissatisfaction (regret) is distributed equitably among agents.

\begin{table}[h]
\centering
    \small 
    \renewcommand{\arraystretch}{0.9}
    \setlength{\tabcolsep}{3pt}
    
    \resizebox{\columnwidth}{!}{%
    \begin{tabular}{lccc}
        \toprule
        \textbf{Model} & \textbf{Reward} & \textbf{Cum. Regret} & \textbf{Regret Gap} \\
        \midrule
        \multicolumn{4}{l}{\textit{\textbf{Cooperative (Spread)}}} \\ 
        MADDPG & $-53.55 \pm 0.55$  & $2177.97 \pm 46.98 $ & $165.32 \pm 61.98$  \\
        MAPPO  & $-55.68 \pm 3.11$  & $988.00 \pm 82.29$   & $527.92 \pm 272.01$ \\
        COMA   & $-91.24 \pm 13.08$ & \textbf{43.72 $\pm$ 11.54}    & \textbf{39.93 $\pm$ 25.37}    \\
        QMIX   & $-88.49 \pm 11.54$ & $390.13 \pm 41.66$   & $248.44 \pm 60.42$  \\
        $\Phi$-AC & \textbf{-42.24 $\pm$ 0.21}  & $460.46 \pm 34.69$    & $260.64 \pm 119.98$ \\
        
        \midrule
        \multicolumn{4}{l}{\textit{\textbf{Zero-Sum (Adversary)}}} \\
        MADDPG & $-4.91 \pm 3.65$    & $1316.86 \pm 105.13$ & $187.11 \pm 122.92$ \\
        MAPPO  & $91.17 \pm 2.96$    & $5470.50 \pm 1845.02$ & $5926.82 \pm 6044.88$ \\
        $\Phi$-AC & \textbf{-3.27 $\pm$ 1.63} & \textbf{281.30 $\pm$ 11.10}    & \textbf{43.32 $\pm$ 18.28}    \\
        
        \midrule
        \multicolumn{4}{l}{\textit{\textbf{Mixed-Motive (Tag)}}} \\
        MADDPG & $6.83 \pm 0.93$     & $1409.98 \pm 50.54$  & $2321.88 \pm 140.22$ \\
        MAPPO  & $227.69 \pm 13.24$  & $1084.68 \pm 202.97$ & $2192.30 \pm 654.54$ \\
        $\Phi$-AC & \textbf{12.55 $\pm$ 7.32}  & \textbf{481.56 $\pm$ 20.51}    & \textbf{53.79 $\pm$ 12.09}    \\
        \bottomrule
    \end{tabular}%
    }
    \caption{Scalability results on MPE. $\Phi$-AC achieves the best balance of efficiency and stability. Note: In zero-sum settings, a reward near 0 with low regret implies a stable NE, whereas high rewards indicate exploitation.}
    \label{tab:final_results}
\end{table}

\subsection{Ablation on Learning Objectives}
\label{sec:ablation_study}

We dissect the contribution of our proposed learning objectives in \texttt{simple\_tag}. The results, summarized in Table~\ref{tab:ablation_results}, reveal that each component is critical for preventing unstable learning outcomes.

\begin{enumerate}
    \item \textbf{w/o Regret Matching (RM) Bonus:} 
    Removing the regret-bias term ($\beta = 0$) leads to unstable learning dynamics. The total regret increases substantially to \textbf{1933.04} with an extreme standard deviation ($\pm$ 2007.34). This indicates that without the RM bonus, the learning dynamics fail to converge, oscillating severely or diverging completely.
    
    \item \textbf{w/o Fairness Objective (RB-SWO):} 
    Removing the fairness constraint ($\alpha^{\text{fair}} \to 0$) degrades both stability ($481.56 \to 510.11$ in Regret) and efficiency. This suggests that the fairness objective acts as a dual-purpose regularizer: it not only prevents inefficient outcomes (``Coordination Collapse") but also accelerates convergence by constraining the search space to the stable region.
\end{enumerate}

\begin{table}[h]
\centering
\small
\resizebox{\columnwidth}{!}{%
\begin{tabular}{lcc}
\toprule
\textbf{Model Variant} & \textbf{Final Reward} & \textbf{Total Regret} \\
\midrule
\textbf{Full $\Phi$-AC (Proposed)} & \textbf{12.55 $\pm$ 7.32} & \textbf{481.56 $\pm$ 20.51} \\
\midrule
w/o RM Bonus & 12.53 $\pm$ 7.95 & 1933.04 $\pm$ 2007.34 \\
w/o Fairness Objective & 10.94 $\pm$ 7.19 & 510.11 $\pm$ 28.46 \\
\bottomrule
\end{tabular}%
}
\caption{Ablation results on \texttt{simple\_tag}. Removing the RM Bonus leads to severe instability (large regret variance), while removing Fairness degrades both convergence and reward. $\Phi$-AC achieves the most stable and efficient equilibrium.}
\label{tab:ablation_results}
\end{table}

\subsection{Robustness in Dilemmas: Melting Pot}
\label{sec:melting_pot}
Finally, we evaluate $\Phi$-AC on Harvest, an SSD that serves as a challenging benchmark for sustainable coordination under shared resource constraints. In this environment, apples regenerate based on the density of nearby unharvested apples, creating a ``Tragedy of the Commons'' where individually greedy behavior can undermine long-term collective productivity. We compare $\Phi$-AC against MAPPO (centralized policy-gradient baseline) and QMIX (value decomposition baseline) using the Sustainability Index (SI), defined as
$
SI = (1 - \text{Gini}) \times \sum_i \text{Return}_i,
$
which captures the trade-off between productivity and fairness.

\begin{figure}[t]
    \centering
    \begin{subfigure}[t]{0.485\columnwidth}
        \centering
        \includegraphics[width=\linewidth]{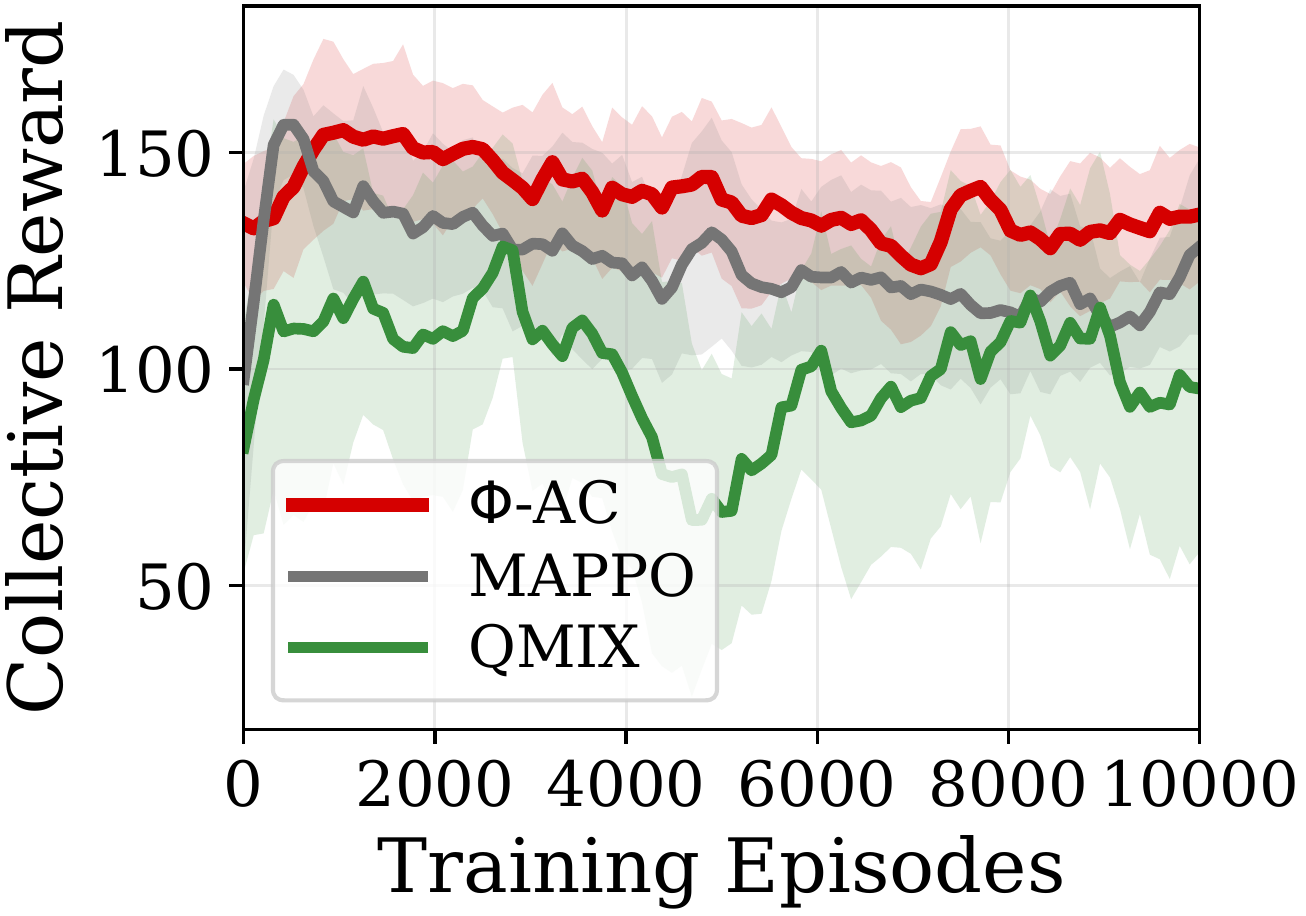}
        \caption{Collective Reward}
        \label{fig:harvest_reward}
    \end{subfigure}
    \hfill
    \begin{subfigure}[t]{0.485\columnwidth}
        \centering
        \includegraphics[width=\linewidth]{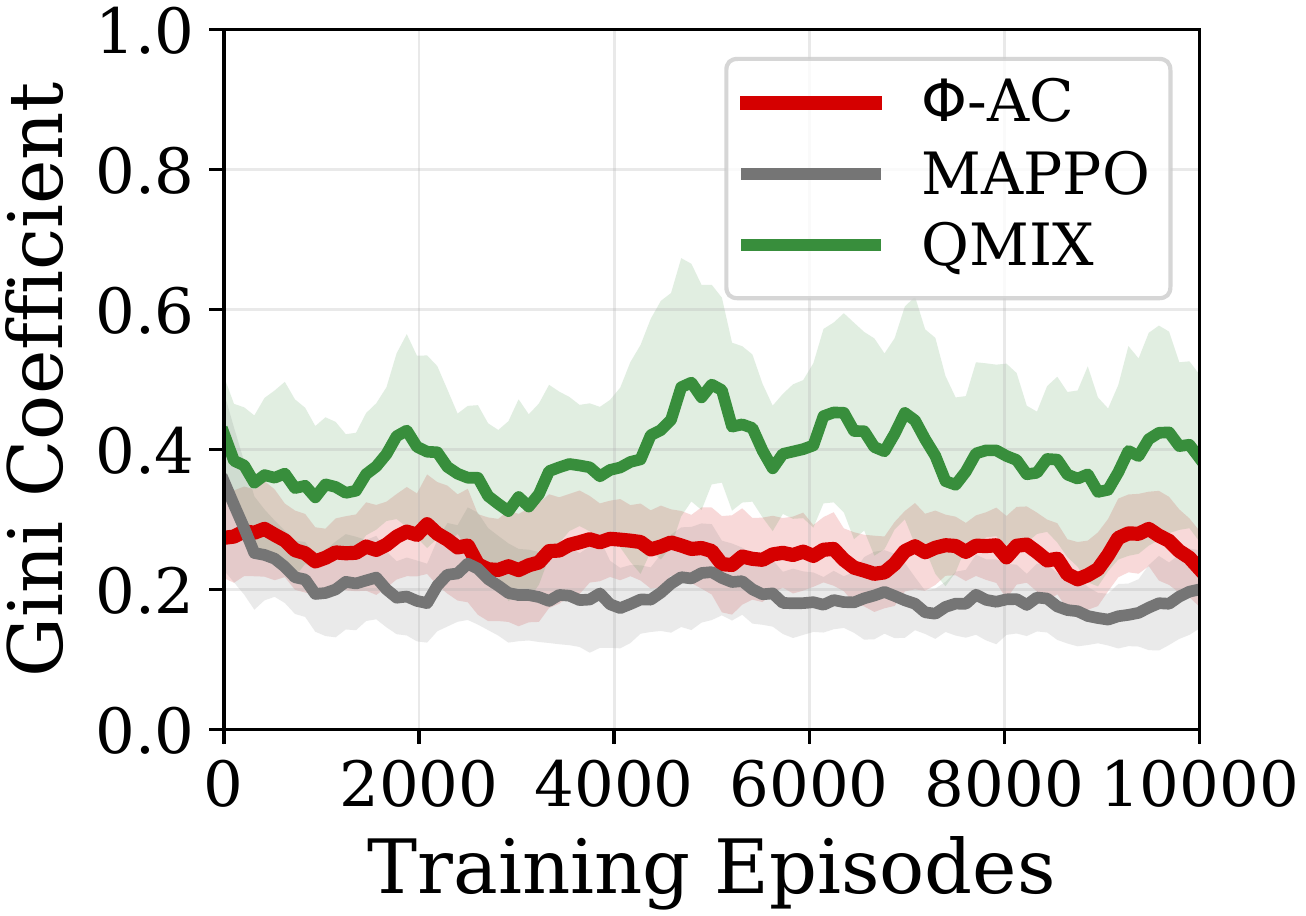}
        \caption{Gini Coefficient}
        \label{fig:harvest_gini}
    \end{subfigure}

    \begin{subfigure}[t]{0.485\columnwidth} 
        \centering
        \includegraphics[width=\linewidth]{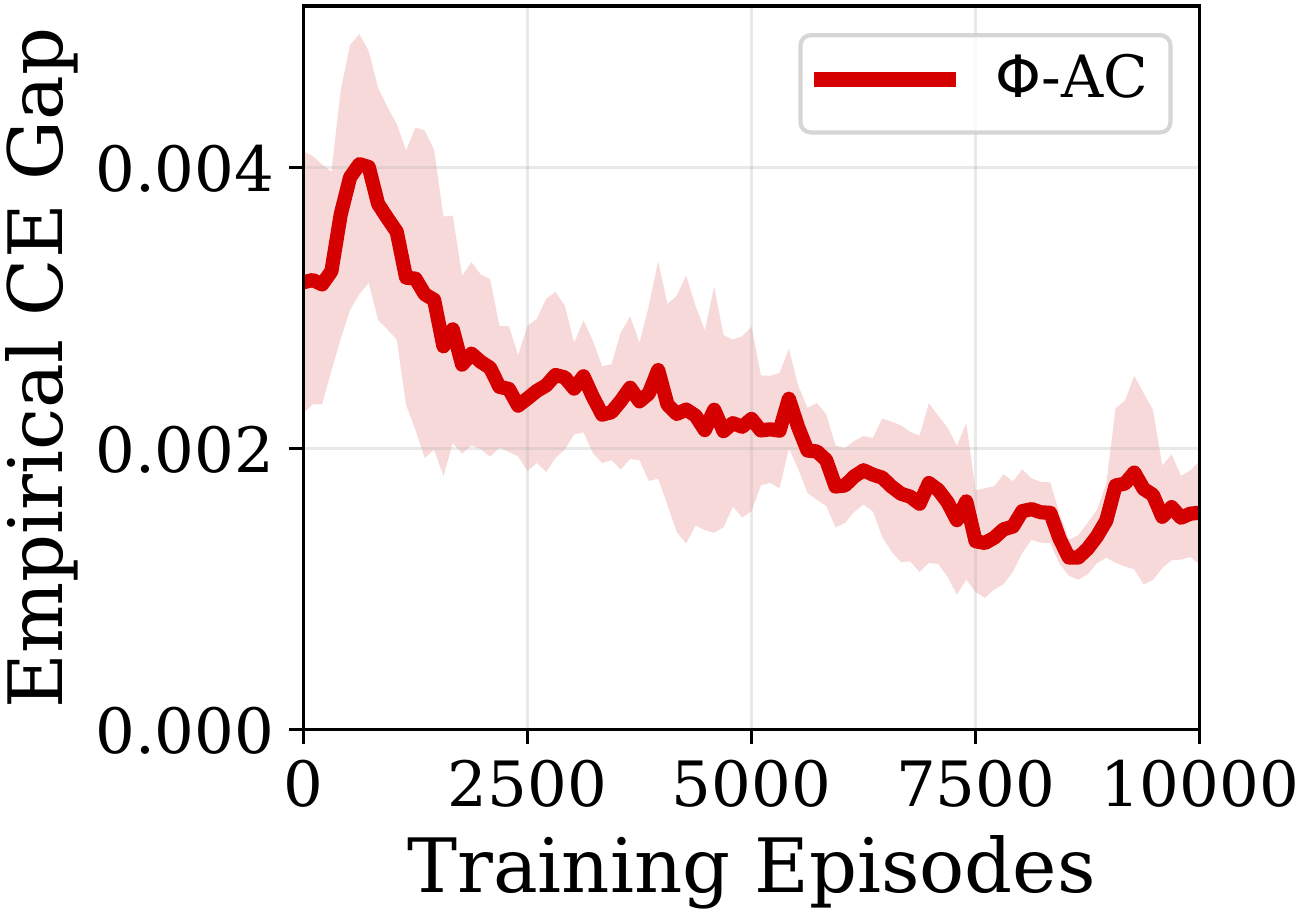}
        \caption{Empirical CE Gap}
        \label{fig:harvest_ce_gap}
    \end{subfigure}
    \hfill
    \begin{subfigure}[t]{0.49\columnwidth}
        \centering
        \includegraphics[width=\linewidth]
        {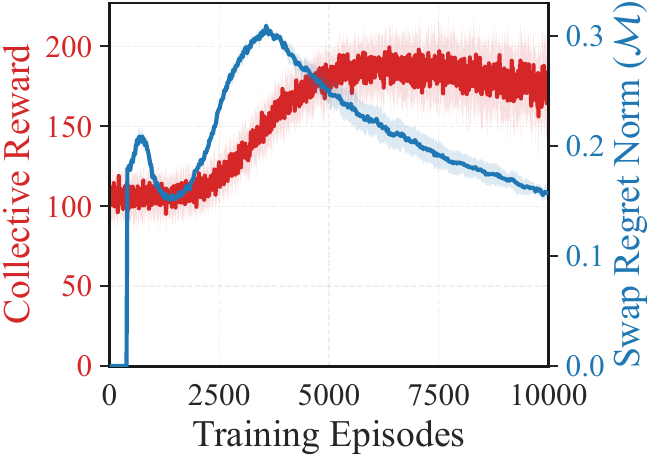}
        \caption{Reward-Regret Dynamics}
        \label{fig:harvest_mechanism}
    \end{subfigure}
    
    \caption{Dynamics in Harvest. 
    (a,b) $\Phi$-AC achieves high collective reward while maintaining competitive fairness. 
    (c) The empirical CE-gap proxy remains low during evaluation. 
    (d) Training reward-regret dynamics show decreasing regret alongside sustained collective reward.}
    \label{fig:harvest_results}

\end{figure}

\paragraph{Escaping Resource Collapse (Fig.~\ref{fig:harvest_results} (a,b)).}
As shown in Table~\ref{tab:harvest_results}, $\Phi$-AC achieves the highest collective reward (134.33) while maintaining competitive fairness relative to MAPPO. In contrast, QMIX suffers from severe inequality and substantially lower collective productivity, resulting in the weakest Sustainability Index. Although MAPPO achieves the lowest Gini coefficient, its lower collective reward leads to a smaller overall sustainability score than $\Phi$-AC. These results suggest that regret-based equilibrium selection helps maintain productive coordination without collapsing into resource-depleting behavior.

\paragraph{Reward--Regret Dynamics (Fig.~\ref{fig:harvest_results} (c,d)).}
To further analyze the learned coordination dynamics, we additionally measure the empirical CE gap, defined as the maximum positive swap regret estimated by the critic during evaluation. As shown in Fig.~\ref{fig:harvest_results}(c), the empirical CE gap of $\Phi$-AC remains consistently low throughout training, suggesting reduced unilateral deviation incentives among agents. 

This interpretation is further supported by the reward-regret dynamics in Fig.~\ref{fig:harvest_results}(d). During early training, swap regret temporarily increases as agents explore alternative coordination strategies. As learning progresses, however, the regret signal decreases while collective reward remains high. This indicates that agents gradually settle into a stable high-welfare coordination pattern rather than maximizing reward through unstable exploitative behavior.

\begin{table}[h]
\centering
\small
\resizebox{\columnwidth}{!}{%
\begin{tabular}{lccc}
\toprule
Model & Coll. Reward & Gini Coeff & Sust. Index \\
\midrule
$\Phi$-AC (Ours) 
& \textbf{134.33 $\pm$ 3.46} 
& 0.26 $\pm$ 0.03 
& \textbf{99.41 $\pm$ 2.67} \\

MAPPO 
& 116.55 $\pm$ 7.96 
& \textbf{0.18 $\pm$ 0.03} 
& 95.48 $\pm$ 7.60 \\

QMIX 
& 94.49 $\pm$ 25.08 
& 0.40 $\pm$ 0.05 
& 59.27 $\pm$ 18.27 \\
\bottomrule
\end{tabular}%
}
\caption{Performance summary in harvest (Evaluation, Last 10\%). $\Phi$-AC achieves the highest Sustainability Index while maintaining strong collective productivity.}
\label{tab:harvest_results}
\end{table}

\section{Conclusion}
\label{sec:conclusion}

In this work, we introduced $\Phi$-AC, a principled framework designed to integrate game-theoretic swap regret minimization with deep MARL. By shifting the paradigm from reward-only optimization to equilibrium selection, $\Phi$-AC addresses the limitations of conventional MARL objectives in distinguishing efficient coordination from suboptimal stability.

Our empirical evaluation supports $\Phi$-AC across a hierarchy of complexity.
At the diagnostic level (Matrix Games), we visualized robust steering toward social optima. 
At the scalable level (MPE), we demonstrated effective fairness-aware coordination. Most significantly, in the complex SSD (Harvest), we provided empirical evidence of high-welfare coordination behavior consistent with correlated-equilibrium principles.

The observed decrease in smoothed swap regret alongside increasing collective reward suggests that $\Phi$-AC promotes sustained coordination rather than short-term reward maximization.
A current limitation is its reliance on discrete action spaces. Future work will extend regret estimation to continuous control, broadening game-theoretic equilibrium selection to physical decision-making domains.

\clearpage
\bibliographystyle{named}
\bibliography{ijcai26}

\clearpage
\appendix


\section{Theoretical Proofs and Background}
\label{app:theory}

In this section, we provide the theoretical foundations connecting swap regret minimization to CE and outline the proof for the convergence proposition presented in the main text.

\subsection{Connection between Swap Regret and CE}
As established in foundational algorithmic game theory \cite{hart2000simple,blum2007external}, there is a fundamental equivalence between no-swap-regret dynamics and the set of CE.
Recall the definition of the swap-regret vector $\boldsymbol{\mathcal{R}}_{i,T}^{\Phi} \in \mathbb{R}^{|\Phi_i|}$, whose components are indexed by deviation mappings $\phi \in \Phi_i$. 
Let $t \in \{1, \dots, T\}$ denote the discrete time step, $a_{i,t} \in \mathcal{A}_i$ be the action taken by agent $i$, and $\mathbf{a}_{-i,t}$ be the joint action of all other agents. For each deviation $\phi$, the average swap regret over $T$ steps is evaluated using the reward function $r_i$:
\begin{equation}
    \mathcal{R}_{i,T}^{\Phi}(\phi)
    =
    \frac{1}{T}\sum_{t=1}^{T}
    \left[
    r_i(\phi(a_{i,t}),\mathbf{a}_{-i,t})
    -
    r_i(a_{i,t},\mathbf{a}_{-i,t})
    \right].
\end{equation}
A distribution $z \in \Delta(\mathcal{A})$ over the joint action space $\mathcal{A}$ constitutes a CE if no agent can improve their expected utility by unilaterally applying any swap $\phi$. Driving the maximum positive regret, denoted as $\max{\phi \in \Phi_i} [\mathcal{R}{i,T}^{\Phi}(\phi)]+$ (where $[\cdot]_+$ indicates the positive part operator), to zero for every agent implies that the empirical distribution of play approaches the CE set \cite{hart2000simple}.

\paragraph{Intuition of Regret Matching.}
To provide deeper intuition for Regret Matching (RM), it is instructive to consider the rationale behind increasing the probability of actions with high positive regret. 
Intuitively, a positive swap regret $\mathcal{R}(a \to a') > 0$ represents a "missed opportunity": the additional utility the agent \textit{would have gained} had they played $a'$ instead of $a$. 
By strictly following RM dynamics (or its smooth approximation via Softmax), the agent probabilistically corrects these past errors, steering the joint policy toward a state where no such missed opportunities exist (i.e., equilibrium).

\paragraph{From Normal-Form to Sequential Games.}
While classical swap regret is defined for normal-form games, extending it to Markov games requires evaluating deviations over entire trajectories. Motivated by the one-shot deviation principle, we approximate trajectory-level deviation incentives through local Markovian deviations evaluated by the stationary action-value function $Q_i^\pi(s,\mathbf a)$. This approximation is most naturally interpreted under stationary Markov policies and Markovian deviation classes. Furthermore, while exact theoretical regret relies on the arithmetic time-average, in our implementation, we use an Exponential Moving Average (EMA) as a practical approximation to stabilize regret estimates in non-stationary deep MARL. Unlike the arithmetic average used in the classical convergence proof, EMA emphasizes recent deviation incentives and should be interpreted as an implementation-level stabilization mechanism rather than a separate convergence guarantee.

\subsection{Proof Sketch of Proposition 1}
\textbf{Proposition 1.} \textit{Let the critic's regret prediction error be bounded by $\|\hat{\boldsymbol{\mathcal{R}}}^\psi - \boldsymbol{\mathcal{R}}^{\text{true}}\|_\infty \leq \epsilon_c$. 
If agents follow smooth regret matching dynamics using $\hat{\boldsymbol{\mathcal{R}}}^\psi$, the joint empirical distribution converges to the set of $\epsilon$-CE, where $\epsilon = \mathcal{O}(\epsilon_c + \delta_{\text{regret}} + \tau \log |\mathcal{A}_i|)$.}

\begin{proof}
Let $\boldsymbol{\mathcal{R}}^{\text{true}}_t$ be the true regret vector at time $t$, and $\hat{\boldsymbol{\mathcal{R}}}^\psi_t$ be the critic's estimate such that $\|\hat{\boldsymbol{\mathcal{R}}}^\psi_t - \boldsymbol{\mathcal{R}}^{\text{true}}_t\|_\infty \le \epsilon_c$.
Standard no-regret algorithms guarantee that the time-averaged regret vanishes at a rate of $\mathcal{O}(1/\sqrt{T})$ using the true regret values.
When using the approximate regret $\hat{\boldsymbol{\mathcal{R}}}^\psi$, the policy update rule induces a perturbed dynamic. Specifically, for any agent $i$ and deviation $\phi \in \Phi_i$, the uniform approximation bound implies over finite time $T$:
\begin{equation}
    \frac{1}{T}\sum_{t=1}^{T} \mathcal{R}^{\mathrm{true}}_{i,t}(\phi)
    \le
    \frac{1}{T}\sum_{t=1}^{T} \hat{\mathcal{R}}^{\psi}_{i,t}(\phi) + \epsilon_c.
\end{equation}
Taking the maximum over $\phi$ and considering only the positive deviation incentives, the average true regret is bounded by:
\begin{equation}
    \left[ \max_{\phi \in \Phi_i} \frac{1}{T}\sum_{t=1}^{T} \mathcal{R}^{\mathrm{true}}_{i,t}(\phi) \right]_+
    \le
    \left[ \max_{\phi \in \Phi_i} \frac{1}{T}\sum_{t=1}^{T} \hat{\mathcal{R}}^{\psi}_{i,t}(\phi) \right]_+ + \epsilon_c.
\end{equation}
Recall that our Lagrangian objective explicitly constrains the $L_2$ norm of the predicted positive regret vector, $\mathcal{M}_i \le \delta_{\text{regret}}$. Because bounding the $L_2$ norm strictly bounds the maximum element ($\|x\|_\infty \le \|x\|_2$), the learning dynamics successfully minimize the first term on the right-hand side up to the slack $\delta_{\text{regret}}$. 
Furthermore, because our actor utilizes a Softmax approximation with temperature $\tau$ (as defined in the main text) rather than a strict step-function Regret Matching operator ($\pi_{t+1} \propto [\hat{\boldsymbol{\mathcal{R}}}^\psi_t]^+$), an additional entropy-induced approximation error proportional to $\mathcal{O}(\tau \log |\mathcal{A}_i|)$ is introduced. 
Taking the asymptotic limit $T \to \infty$, the $\mathcal{O}(1/\sqrt{T})$ convergence term vanishes. Thus, the empirical distribution converges to the set of $\epsilon$-CE with $\epsilon = \mathcal{O}(\epsilon_c + \delta_{\text{regret}} + \tau \log |\mathcal{A}_i|)$. When $\tau$ is annealed to zero, this last term vanishes.
\end{proof}

\section{Detailed Architecture and Implementation}
\label{app:implementation}

\subsection{Centralized Attention Critic}
To avoid explicit enumeration of counterfactual rollouts, we implement a centralized attention critic that predicts vector-valued regret estimates in a single forward pass.

\paragraph{Regret-Head Justification \& Gradient Flow.}
While the raw counterfactual advantage can be computed as $Q_i(s,(a_i',\mathbf a_{-i})) - Q_i(s,\mathbf a)$, the regret constraint depends on its positive part: $\left[Q_i(s,(a_i',\mathbf a_{-i})) - Q_i(s,\mathbf a)\right]_+$. 
Accordingly, the explicit regret head predicts a smoothed non-negative approximation of this positive-part regret.
Crucially, as mentioned in the main text, we ensure end-to-end differentiability:
\begin{enumerate}
    \item \textbf{Softplus Activation:} The Regret-Head uses a \texttt{Softplus} activation to strictly enforce non-negativity ($\hat{\mathcal{R}} \ge 0$) while maintaining smooth gradients, unlike ReLU which has zero gradients in the negative regime.
    \item \textbf{Gumbel-Softmax:} The actor's policy outputs are sampled via the Gumbel-Softmax trick (Temperature $\tau=1.0$, annealed to $0.1$). This allows the gradient from the stability constraint $\mathcal{M}_i = \|\hat{\boldsymbol{\mathcal{R}}}_i\|_2$ to backpropagate through the critic to the actor.
\end{enumerate}

\paragraph{Network Initialization.}
To prevent value explosion in the early stages of training, we apply a specific initialization to the Q-Head's output layer: weights are scaled by $0.01$ and biases are initialized to $0$. This ensures that the initial Q-values are near zero. For the regret head, we initialize the final bias to a small negative value so that the Softplus output starts close to zero, allowing the agents to begin with unbiased exploration.

\paragraph{FiLM Modulation.}
To handle non-stationarity, the \texttt{Regret Coordinator} module uses Feature-wise Linear Modulation (FiLM). The cumulative regret $\boldsymbol{\mathcal{R}}^{\text{cum}}$ is processed by an MLP to generate scale ($\gamma_{\text{FiLM}}$) and shift ($\beta_{\text{FiLM}}$) parameters. Let $F$ denote the intermediate feature representation of the global state and joint action. The modulated feature $F'$ is computed as:
\begin{equation}
    F' = \gamma_{\text{FiLM}}(\boldsymbol{\mathcal{R}}^{\text{cum}}) \odot F + \beta_{\text{FiLM}}(\boldsymbol{\mathcal{R}}^{\text{cum}})
\end{equation}
This allows the critic to dynamically adapt its attention focus based on the current distance to equilibrium.

\subsection{Actor Architecture}
\begin{itemize}
    \item \textbf{Vector Inputs (MPE):} MLP with 2 hidden layers (128 units), ReLU activation.
    \item \textbf{Pixel Inputs (Melting Pot):} Standard Nature-CNN architecture.
    \begin{itemize}
        \item Input: $68 \times 68$ Grayscale (Stacked 4 frames)
        \item Conv1: 32 filters, $8 \times 8$, stride 4
        \item Conv2: 64 filters, $4 \times 4$, stride 2
        \item Conv3: 64 filters, $3 \times 3$, stride 1
        \item FC: 1024 $\to$ 128 $\to$ Action Dim
    \end{itemize}
\end{itemize}

\subsection{Lagrangian Auto-Tuning and Exploration}
We automate the selection of penalty coefficients $\alpha$ via Dual Gradient Descent:
\begin{itemize}
    \item \textbf{Fairness:} $\alpha^{\text{fair}}$ increases only when Regret Norm $> \delta_{\text{regret}}$ (0.15).
    \item \textbf{Entropy Annealing:} In our implementation, the entropy target decays from a ratio of 0.8 to 0.05 over the first 10\% of episodes to facilitate the transition from Discovery to Convergence.
    \item \textbf{Regret Exploration Bonus:} As listed in the hyperparameters, we utilize an exploration bonus ($\lambda_{\text{regret}}$) to scale intrinsic exploration signals during the initial discovery phase.
\end{itemize}

\section{Environment Settings}
\label{app:env}

\subsection{Matrix Games Payoff}
\label{app:matrix_payoff}

For reproducibility, we explicitly detail the payoff matrices used in Section~5.1. The values represent the joint reward $(r_1, r_2)$ for the row player (Agent 1) and column player (Agent 2).

\begin{table}[h]
\centering
\caption{Payoff Matrices for Iterated Matrix Games. Actions are denoted as \textbf{C} (Cooperate) and \textbf{D} (Defect).}
\label{tab:matrix_payoff}
\renewcommand{\arraystretch}{1.2}
\begin{tabular}{lcccc}
\toprule
\multirow{2}{*}{\textbf{Game}} & \multirow{2}{*}{\textbf{Ag 1 \textbackslash Ag 2}} & \multicolumn{2}{c}{\textbf{Agent 2 Action}} \\
\cmidrule(lr){3-4}
 & & \textbf{C} & \textbf{D} \\
\midrule
\multirow{2}{*}{Prisoner's Dilemma} & \textbf{C} & (3, 3) & (0, 5) \\
 & \textbf{D} & (5, 0) & (1, 1) \\
\midrule
\multirow{2}{*}{Chicken} & \textbf{C} & (3, 3) & (1, 4) \\
 & \textbf{D} & (4, 1) & (0, 0) \\
\midrule
\multirow{2}{*}{Stag Hunt} & \textbf{C (Stag)} & (4, 4) & (0, 3) \\
 & \textbf{D (Hare)} & (3, 0) & (3, 3) \\
\bottomrule
\end{tabular}
\end{table}

\subsection{MPE Zero-Sum Modification}
In \texttt{simple\_adversary}, to ensure a rigorous zero-sum evaluation, we modified the reward structure such that the sum of rewards is strictly zero ($+10$ for the adversary, $-10$ for the agents). Since the modified reward is strictly zero-sum, the total collective reward across both sides is zero by construction and cannot be used as a convergence metric. Instead, we evaluate convergence using side-specific returns, regret, and the reduction of exploitability-like performance gaps. In this setting, $\Phi$-AC successfully stabilizes these metrics without oscillating into a regime where one side continually exploits the other, unlike the baselines.

\subsection{Melting Pot (Harvest)}
We use the \texttt{commons\_harvest\_open} substrate with the following preprocessing:
\begin{itemize}
    \item \textbf{Resolution:} Resized to $68 \times 68$.
    \item \textbf{Frame Stack:} Last 4 frames are stacked to capture temporal dynamics.
    \item \textbf{Action Repeat:} Actions are repeated for 4 steps.
    \item \textbf{Reward:} $+1$ for collecting an apple. Zapping removes a target from the game for 25 steps.
\end{itemize}

\section{Hyperparameters}
\label{app:hyperparams}

We provide detailed hyperparameters for both experimental domains. Note that the simpler MPE environment utilizes larger batch sizes and learning rates compared to the pixel-based Melting Pot environment.

\subsection{MPE and Matrix Games}
Table \ref{tab:hyperparams_mpe} lists the settings used for MPE environments (\texttt{simple\_spread}, \texttt{simple\_tag}, \texttt{simple\_adversary}) and matrix games.

\begin{table}[H]
\centering
\caption{Hyperparameters for MPE \& Matrix Games}
\label{tab:hyperparams_mpe}
\begin{tabular}{ll}
\toprule
\textbf{Parameter} & \textbf{Value} \\
\midrule
\multicolumn{2}{l}{\textit{Optimization}} \\
Total Episodes & 30,000 \\
Episode Length & 25 \\
Actor Learning Rate & $3 \times 10^{-4}$ \\
Critic Learning Rate & $3 \times 10^{-4}$ \\
Batch Size & 1024 \\
Buffer Capacity & 100,000 \\
Discount Factor ($\gamma$) & 0.99 \\
Soft Update ($\tau$) & 0.01 \\
\midrule
\multicolumn{2}{l}{\textit{$\Phi$-AC Specific}} \\
Regret Exploration Bonus ($\lambda_{\text{regret}}$) & 1.0 \\
Regret Bias Scale ($\beta$) & 1.0 \\
Fairness Threshold ($\delta_{\text{regret}}$) & 0.15 \\
FiLM Scale & 1.0 \\
Entropy Decay (Ratio) & $1.0 \to 0.05$ \\
Regret EMA Decay & 0.995 \\
\bottomrule
\end{tabular}
\end{table}

\subsection{Melting Pot (Harvest)}
Table \ref{tab:hyperparams_harvest} lists the settings for the sequential social dilemma experiments.

\begin{table}[H]
\centering
\caption{Hyperparameters for Melting Pot (Harvest)}
\label{tab:hyperparams_harvest}
\begin{tabular}{ll}
\toprule
\textbf{Parameter} & \textbf{Value} \\
\midrule
\multicolumn{2}{l}{\textit{Optimization}} \\
Actor Learning Rate & $1 \times 10^{-4}$ \\
Critic Learning Rate & $1 \times 10^{-4}$ \\
Batch Size & 512 \\
Buffer Capacity & 30,000 \\
Discount Factor ($\gamma$) & 0.99 \\
Soft Update ($\tau$) & 0.01 \\
\midrule
\multicolumn{2}{l}{\textit{$\Phi$-AC Specific}} \\
Regret Exploration Bonus ($\lambda_{\text{regret}}$) & 1.0 \\
Regret Bias Scale ($\beta$) & 0.75 \\
Fairness Threshold ($\delta_{\text{regret}}$) & 0.15 \\
FiLM Scale & 1.5 \\
Entropy Decay (Ratio) & $0.8 \to 0.05$ \\
Regret EMA Decay & 0.90 \\
\bottomrule
\end{tabular}
\end{table}
\end{document}